\documentclass[letterpaper, 12 pt]{article}

\usepackage{setspace}
\usepackage{cite}
\usepackage{graphicx}
\usepackage{color}
\usepackage{psfrag}
\usepackage{subfigure}
\usepackage{amsmath, amssymb, amsbsy, amsthm}
\usepackage{array}
\newtheorem{theorem}{Theorem}
\newtheorem{lemma}[theorem]{Lemma}

\newtheorem{assumption}{Assumption}
\newtheorem{corollary}{Corollary}
\newtheorem{proposition}{Proposition}

\usepackage{fullpage}

\newcommand{\newrj}[1]{#1}
\newcommand{\delrj}[1]{}

\renewcommand{\v}[1]{\ensuremath{\boldsymbol{\mathrm{#1}}}}
\renewcommand{\P}{\ensuremath{\mathcal{P}}}
\newcommand{\ignore}[1]{}
\newcommand{\E}{\ensuremath{\mathbb{E}}}

\newcommand{\PiZero}{\ensuremath{\Pi_{0}}}
\newcommand{\PiPrimary}{\ensuremath{\Pi_1}}

\newcommand{\PiSecondary}{\ensuremath{\Pi_2}}

\renewcommand{\baselinestretch}{1.2}


\begin{document}

\title{Competition in Wireless Systems via Bayesian Interference Games}
\author{Sachin Adlakha, Ramesh
Johari, and Andrea Goldsmith\thanks{This project was supported by the
Stanford Clean Slate Internet Program and by the Defense Advanced
Research Projects Agency under the Information Theory for Mobile Ad
Hoc Networks (ITMANET) Program.  The authors are with Stanford
University, Stanford, CA, 94305.}}
\date{August 27, 2007}

\maketitle


\begin{abstract}

We study competition between wireless devices with {\em incomplete
information} about their opponents. We model such interactions as {\em
Bayesian interference games}. \newrj{Each wireless device selects a power
profile over the entire available bandwidth to maximize its data rate
(measured via Shannon capacity), which requires mitigating the effect of
interference caused by other devices.} Such competitive models
represent situations in which several wireless devices share 
spectrum without any central authority or coordinated protocol.

In contrast to games where devices have complete information about
their opponents, we consider scenarios where the devices are unaware
of the interference they cause to other devices. Such games, which are
modeled as Bayesian games, can exhibit significantly different \delrj{Nash}
equilibria. We first consider a simple scenario where the devices
select their power profile simultaneously. In such
\textit{simultaneous move games}, we show that the unique \newrj{Bayes-}Nash
equilibrium is where both devices spread their power equally across
the entire bandwidth. We then extend this model to a two-tiered
spectrum sharing case where users act sequentially. Here one of the
devices, called the {\em primary user}, is the owner of the spectrum
and it selects its power profile first. The second device (called the
{\em secondary user}) then responds by choosing a power profile to
maximize its Shannon capacity. In such \textit{sequential move games},
we show that there exist \newrj{equilibria in} which the primary user
obtains a higher data rate by using only a part of the bandwidth.

In a {\em repeated} Bayesian interference game, we show the existence
of {\em reputation effects}: an informed primary user can ``bluff" to
prevent spectrum usage by a secondary user who suffers from lack of
information about the channel gains. The resulting equilibrium
\newrj{can be} highly inefficient, suggesting that competitive
spectrum sharing is highly suboptimal. This observation points to
the need for some regulatory protocol to attain a more efficient
spectrum sharing solution.

\end{abstract}

\section{Introduction}
Our paper is motivated by a scenario where several wireless devices
share the same spectrum. Such scenarios are a common occurrence in
unlicensed bands such as the ISM and UNII bands. In such bands, diverse
technologies such as 802.11, Bluetooth, Wireless USB, and cordless
phones compete with each other for the same bandwidth. Usually, these
devices have different objectives, they follow different protocols,
and they do not cooperate with each other.  Indeed, although the FCC is
considering wider implementation of ``open'' spectrum sharing models,
one potential undesirable outcome of open spectrum could be a form of
the ``tragedy of the commons'': self-interested wireless devices
destructively interfere with each other, and thus eliminate potential
benefits of open spectrum.


Non-cooperative game theory offers a natural framework to model such
interactions between competing devices. In \cite{YGC_JSAC2002}, the
authors studied competition between devices in a Gaussian noise
environment as a {\em Gaussian interference} (GI) game. This work was
extended in \cite{EPT_JSAC2007} for the case of spectrum allocation
between wireless devices; the authors provided a 
non-cooperative game theoretic framework to study issues such as
spectral efficiency and fairness. In \cite{LL_DySpan2005}, the authors
derived channel gain regimes where cooperative schemes would perform
better than non-cooperative schemes for the GI game.

The game theoretic models used in these previous works typically
assume that the matrix of channel gains among all users is completely
known to the players.  This may not be realistic or practical in many
scenarios, as competing technologies typically do not employ a
coordinated information dissemination protocol.  Even if information
dissemination protocols were employed, incentive mechanisms would be
required in a situation with competitive devices to ensure that
channel states were truthfully exchanged.  By contrast, our paper
studies a range of non-cooperative games characterized by the feature
that there is {\em incomplete information} about some or all channel
gains between devices. Such scenarios are captured through static and
dynamic {\em Bayesian games} \cite{Fudenberg_tirole_1991}.

We consider a simplistic model where two transmitter-receiver (TX-RX)
pairs, or ``users'', share a single band divided into $K$
subchannels.\footnote{Throughout this paper, a transmitter-receiver
pair is identified with a particular user.}  We assume both users
face a total power constraint, and that the noise floor is identical
across subchannels.  We further assume that channel gains are drawn
from a fixed distribution that is common knowledge to the users.  We
make the simplifying assumption of {\em flat fading}, i.e., constant
gains across subchannels, to develop the model.
A user's strategic decision consists of an allocation of
power across the available subchannels \newrj{to maximize the available data
rate (measured via Shannon capacity).}

In Section \ref{sec:StaticGames} we consider a {\em simultaneous-move}
game between the devices under this model. We study two scenarios:
first, a game where all channel gains are unknown to both users; and
second, a game where a user knows the gain between its own TX-RX pair
as well as the interference power gain from \newrj{the other
transmitter at} its own receiver (also called incident channel gains),
but it does not know the channel gain between the TX-RX pair of the
other user or the interference it causes to \newrj{the other
receiver}.  In these two scenarios, we show that there exists a unique
symmetric Bayes-Nash equilibrium\footnote{Here ``symmetric'' means
that both users' strategies are identical functions of their channel
gain. Asymmetric equilibria, where users may have different functional
form of their strategies, are harder to justify, as they would require
prior coordination among the devices to ``agree'' on which equilibrium
is played.}, where both users equally spread their power over the band
(regardless of the channel gains observed). In this equilibrium, the
actions played after channel gains are realized are also a Nash
equilibrium of the complete information game.

While simultaneous-move games are a good model for competition between
devices with equal priority to shared resources, they are not
appropriate for a setting where one device is a natural incumbent,
such as primary/secondary device competition. In such two-tiered
models for spectrum sharing, some radio bands may be allocated to both
primary and secondary users. The primary users have priority over the
secondary users and we use game theory to analyze competition in such
scenarios.  In Section \ref{StackelbergGames}, we consider a two-stage
{\em sequential} Bayesian game where one device (the primary) moves
before the other (the secondary); we find that asymmetric equilibria
can be sustained where the devices sometimes operate in disjoint
subchannels \newrj{(called ``sharing'' the bandwidth)}, provided
interference between them is sufficiently large.  We also add an {\em
entry stage} to the game, where the secondary device decides whether
or not it wants to operate in the primary's band in the first place;
we also characterize Nash equilibria of this game in terms of the
distribution of the \newrj{incident} channel gains.

In Section \ref{sec:RepeatedGames}, we use the sequential Bayesian
game with entry to study {\em repeated} interaction between a primary
and secondary user.  We consider a model where a secondary user
repeatedly polls a primary user's band to determine if it is
worthwhile to enter. 
 Using techniques pioneered in the economics literature on {\em
  reputation effects} \cite{Mailath_samuelson_2006}, we show the
  existence of a {\em sequential equilibrium} where the primary user
  exploits the secondary user's lack of channel knowledge to its own
  advantage; in particular, we show that by threatening to
  aggressively spread power against the secondary, the primary can
  deter the secondary from entering at all.  (See footnote
  \ref{foot:seqeq} for the definition of sequential equilibrium.)
In a complete information game, the secondary user knows that the best
 response of the primary user to an entry by the secondary user is to
\newrj{share the bandwidth}. Thus, the primary user's threat of aggressively
 spreading power would not be \textit{credible} in such scenarios.
 Our result suggests that, in the absence of regulation, primary
 devices may inflate their power profile to ``scare'' secondary
 devices away, even if such behavior is suboptimal for the primary in
 the short term.

We conclude by noting that game theoretic models have also been used
in the {\em design} of power control and spectrum sharing schemes.  A
market-based power control mechanism for wireless data networks was
discussed in \cite{SMG_TCOMM2002}. In \cite{AA_TAC2003}, the authors
model a power control mechanism as a supermodular game and prove
several convergence properties.  Supermodularity was also employed to
describe a distributed power control mechanism in \cite{HBH_JSAC2006};
this latter paper also contains insights regarding supermodularity of
the GI game. Another approach to spectrum sharing is to consider real
time ``auctions"' of the channels as described in \cite{SMZ_JSAC2006,
HBH_MONET2006}.  By contrast, our paper studies a setting where no
coordination mechanism exists, and the devices are completely
competitive; hence we do not follow this approach.


\ignore{

 In this paper, we take a
different approach where we analyze competition between wireless
devices as a game with incomplete information. As a first step, we
analyze the GI game where the users are unaware of all the channel
gains. As is the case with the GI game of complete information, the
simultaneous move GI game with incomplete information also leads to
very inefficient outcomes. This model stands at another extreme of a
realistic scenario. In typical wireless systems, the transmitter is
aware of the self channel gain and can potentially be aware of the
interference received by the receiver. We extend our model to the case
where the players know their self channel and the incident channel
gains.

As the results in this paper show, the simultaneous move GI games,
have very inefficient outcomes. Moreover, for a lot of cases this
inefficient outcome is the only possible outcome. In
\cite{EPT_JSAC2007}, it was shown that when channel gains satisfy
certain a condition, the only equilibrium is where both the users
spread their power and hence interfere with each other. For the game
with incomplete information, we prove that the full spread equilibrium
is the only equilibrium outcome for all possible channel gain
distributions. However, in reality one would believe that if the
devices were smart, they would share the bandwidth if that was
profitable for them. This apparent lack of ``smartness'' in the
devices is not due to their competitive behavior. It stems from the
fact that the model considered is highly simplistic and far from
reality (even with incomplete information assumption).

The wireless devices typically do not move simultaneously. In fact,
simultaneous move would require some form of prior coordination which
is counter to the assumption of competitive nature of the devices. In
real world scenarios, more often devices operate in a sequential
manner. Typically, one would see a WiFi device operating in a
particular band and a bluetooth or wireless USB device trying to enter
and use the same bandwidth. This sequential nature of the moves can be
modeled as a sequential GI game. We study the sequential move game,
and derive channel gain conditions under which sharing of the
bandwidth is possible. We explicitly model the entry behavior in these
games, and show that under certain modeling assumptions it is possible
that the device's best response is to actually not enter the system.

Another approach to expand possible outcome space in the GI games is
to consider repeated versions of the same static game. A repeated GI
game was analyzed in \cite{EPT_JSAC2007}. As shown in the paper, the
repeated game has a wide variety of potential outcomes. This is not
surprising given the \textit{Folk Theorem}
\cite{Fudenberg_tirole_1991} in the game theory literature. The folk
theorem asserts that almost any equilibrium can be sustained in a
repeated game if the players are sufficiently patient. This
possibility of ``too-many" equilibria renders the predictive outcome
of the game useless. More over, as the folk theorem asserts, most of
the equilibria are possible if the players are extremely patient. This
assumption usually does not hold in practice. A wireless USB device
for example is concerned about its data rate during a short
duration. A 802.11 network access point on the other hand operates
over longer durations and hence can be potentially patient. The
plethora of possibilities that arise in wireless devices are not
captured in a simplistic repeated game model.

In this paper, we extend our sequential move model to a repeated
game. We explicitly model the asymmetry between devices by assuming a
long lived player interacting with a short-lived player over multiple
periods. This model is a first step in understanding competition
between devices with asymmetric utility functions as well as
incomplete information : a scenario that is closer to reality. As we
show such asymmetry can possibly lead to interesting
\textit{reputations} like behavior \cite{Mailath_samuelson_2006}. In
such cases, the patient user can exploit uncertainty in the
information for its own benefits. Such behavior are often seen in
several competitive models studied in economics literature. We
explicitly derived conditions under which such a behavior is possible
and analyze it. We thus bring out the importance of incomplete
information as well as asymmetry in devices by showing existence of
equilibria which are not possible in complete information repeated
games.

The rest of the paper is organized as follows. In section
\ref{sec:StaticGames}, we first study a simultaneous move GI game with
incomplete information. We modify the game in section
\ref{StackelbergGames} to consider the sequential entry or exit by one
of the players. We then extend these sequential games to a repeated
game model in the section \ref{sec:RepeatedGames}. We then conclude
the paper by presenting some new insights and extensions for future
work.

}

%

\section{Static Gaussian Interference Games}
\label{sec:StaticGames}
In this section we consider a range of static game-theoretic models
for competition between two devices. In other words, in all the models
we consider, both devices {\em simultaneously} choose their actions,
and then payoffs are realized.  We start in Section \ref{sec:prelim}
by defining the model we consider, an interference model with two
users.  In Section \ref{sec:bgig} we define a Bayesian game where
both users do not know any of the channel gains.  However, this model
is not necessarily realistic; in many scenarios information is {\em
  asymmetric}: a device may know its own incident channel gains, but
not those incident on the other devices.  Thus in Section
\ref{sec:pbgig}, we introduce a ``partial'' Bayesian game (where 
the users know their own channel gain as well as the received interference gain), and study
its equilibria.  

\subsection{Preliminaries}
\label{sec:prelim}


We consider a two user Gaussian interference model (see Figure \ref{Fig:InterferenceCh}) with $K$
subchannels; in each subchannel $k = 1, \ldots, K$, the model is:
\begin{align}
y_{i}[n] = \sum_{j = 1}^{2}h^{k}_{ji}x_{j}[n] + w_{i}[n], \ i = 1, 2,
\label{Eq:InterferenceCh}
\end{align}
where $x_{i}[n]$ and $y_{i}[n]$ are user $i$'s input and output
symbols at time $n$, respectively. Here $h^{k}_{ij}$ is the channel gain
from the transmitter of user $i$ to the receiver of user $j$ in subchannel $k$. We assume that the system exhibits {\em
flat fading}, i.e., the channel gains $h^{k}_{ij} = h_{ij}$ for all $k = 1, \ldots, K$.  The noise processes $w_{1}[n]$ and $w_{2}[n]$ are assumed to be 
independent of each other, and are i.i.d over time with $w_{i}[n] \sim
\mathcal{N}(0, N_{0})$, where $N_{0}$ is the noise power spectral
density.

Each user has an average power constraint of $P$.  We assume that each
user treats interference as noise and that no interference
cancellation techniques are used.  Denote by $P_{ik}$ the transmission
power of user $i$ in channel $k$.  Let $\v{P}_i = (P_{i1}, \ldots, P_{iK})$,
and $\v{P} = (\v{P}_1, \v{P}_2)$.    We will frequently
use the notation ``$-i$'' to denote the player other than $i$ (i.e.,
player 1 if $i = 2$, and player 2 if $i = 1$).  The {\em utility}
$\Pi_{i}(\v{P})$ of user $i$ is the Shannon capacity data rate limit for the user.  Under the above assumptions, given a power
vector $\v{P}$, the Shannon capacity limit of a user $i$ over all $K$ subchannels is given
as:
\begin{align}
\Pi_i(\v{P}) = \sum_{k= 1}^{K} \left[\frac{1}{2}\log\left(1 + \frac{g_{ii}P_{ik}}{N_{0} + g_{-i,i}P_{ik}}\right)\right].
\label{Eq:UtilityFunctions}
\end{align}
Here $g_{ij}$ is the interference gain between the transmitter of user
$i$ and the receiver of user $j$, and is defined as $g_{ij} =
|h_{ij}|^{2}$; we let $\v{g} = (g_{11}, g_{12}, g_{21}, g_{22})$
denote the channel gain vector. Note
that for each $i$, the power allocation must satisfy the constraint
$\sum_k P_{ik} \leq P$.  In particular, both users share the same
power constraint.

In the complete information Gaussian interference (GI) game, each
user $i$ chooses a power allocation $\v{P}_i$ to maximize the utility
$\Pi_i(\v{P}_i, \v{P}_{-i})$ subject to the total power constraint,
given the power allocation $\v{P}_{-i}$ of the opponent.  Both users
carry out this maximization with full knowledge of the channel gains
$\v{g}$, the noise level $N_0$, and the
power limit $P$.  A {\em Nash equilibrium} (NE) of this
game is a power vector $\v{P}$ where both users have simultaneously
maximized payoffs.  This interference game has been analyzed
previously in the literature, and in particular existence and conditions
for uniqueness of the equilibrium have been developed in
\cite{YGC_JSAC2002, EPT_JSAC2007}.

In this paper, we take a different approach: we consider the same
game, but assume some or all of the channel gains are {\em unknown} to
the players.  In the next two sections, we introduce two variations on
this game.

\subsection{The Gaussian Interference Game with Unknown Channel Gains}
\label{sec:bgig}

We begin by considering the GI game, but where neither player has
knowledge of the channel gains $g_{ij}$; we refer to this as the {\em unknown
channel GI (UC-GI) game}. 
Our motivation is the fast-fading scenario where the channel gains change 
rapidly relative to the transmission strategy decision. This makes the 
channel gain feedback computationally expensive and generally inaccurate. 

We assume that the channel gains $\v{g}$ are drawn from a distribution
$F$, with continuous density $f$ on a compact subset $G \subset \{ \v{g} :
g_{ij} > 0\ \forall\ i,j\}$, and we assume that both players do not
observe the channel gains.  For simplicity, we assume that $F$ factors
so that $(g_{11}, g_{21})$ is independent of $(g_{22}, g_{12})$.

We assume that both players now maximize
{\em expected} utility, given the power allocation of their opponent;
i.e., given $\v{P}_{-i}$, player $i$ chooses $\v{P}_i$ to maximize
$\E[\Pi_i(\v{P}_i, \v{P}_{-i})]$, subject to the power constraint
(the expectation is taken over the distribution $F$).  A NE of the
UC-GI game is thus a power vector where both players have
simultaneously maximized their expected payoffs.

We focus our attention on the case of \textit{symmetric} NE, i.e.,
where both players use the same strategy.  It is possible that there
may exist several asymmetric equilibria, but for the users to operate at any 
one of those equilibria would require some form of prior
coordination. Since the users in this game do not coordinate, it is
reasonable to search for symmetric equilibrium. The next theorem shows
that if $K = 2$, the UC-GI game has a unique symmetric Nash
equilibrium. 

\begin{theorem}
For the UC-GI game with $K = 2$ subchannels, there exists a unique
symmetric pure strategy Nash equilibrium, regardless of the channel
distribution $F$, where the users spread their power equally over the
entire band; i.e, the unique NE is $P_{11}^* = P_{12}^* = P_{21}^* = P_{22}^*
= P/2$.
\label{Thm:UniquenessBGIG}
\end{theorem}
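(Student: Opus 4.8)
The plan is to exploit the fact that with $K=2$ each user's strategy collapses to a single scalar. Because $\Pi_i$ is strictly increasing in each of its own powers $P_{ik}$ (raising $P_{ik}$ lifts only the channel-$k$ $\log$ term while the opponent's interference stays fixed), the power constraint binds at any best response, so I may write user $i$'s allocation as $(p_i, P-p_i)$ with $p_i = P_{i1}\in[0,P]$ and the opponent's as $(p_{-i}, P-p_{-i})$. Writing $a = g_{ii}$ for the self-gain and $b = g_{-i,i}$ for the incident interference gain, and using the interference contributed by the opponent's transmission (the term $g_{-i,i}P_{-i,k}$ in the SINR implied by the channel model \eqref{Eq:InterferenceCh}), the whole theorem reduces to locating the symmetric fixed points of the scalar best-response map on $[0,P]$.

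First I would show that for every fixed $p_{-i}$ the map $p_i \mapsto \E[\Pi_i]$ is strictly concave. For each realization of $(a,b)$ the integrand is a sum of two terms, each a $\log$ of an affine function of $p_i$ (increasing for the channel-$1$ term, decreasing for the channel-$2$ term), hence concave; strictness follows since $a>0$ on the compact set $G$. Boundedness of the gains lets me differentiate under the expectation, and concavity makes the best response to any $p_{-i}$ unique and, when interior, characterized by the first-order condition
\[ \frac{1}{2}\,\E\!\left[\frac{a}{N_{0} + b\,p_{-i} + a\,p_i} - \frac{a}{N_{0} + b(P - p_{-i}) + a(P - p_i)}\right] = 0. \]

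Next I would analyze the symmetric candidate by setting $p_i=p_{-i}=p$, which yields the scalar function
\[ \psi(p) = \frac{1}{2}\,\E\!\left[\frac{a}{N_{0} + (a+b)\,p} - \frac{a}{N_{0} + (a+b)(P - p)}\right], \]
whose zeros are exactly the interior symmetric equilibria. Two observations finish this step. At $p=P/2$ the two fractions coincide termwise, so $\psi(P/2)=0$ for \emph{every} distribution $F$; by strict concavity $P/2$ is then a genuine maximizer, so $P_{i1}^*=P_{i2}^*=P/2$ is a symmetric equilibrium, and since the argument uses only each player's own pair of gains it holds simultaneously for both players, making the equal-spread profile a mutual best response. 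For uniqueness, write $\rho(p)=a/(N_{0}+(a+b)p)$, which is strictly decreasing, so $\psi(p)=\tfrac12\E[\rho(p)-\rho(P-p)]$; as $p$ grows $\rho(p)$ falls while $\rho(P-p)$ rises, so the integrand is strictly decreasing pathwise and $\psi$ is strictly decreasing on $[0,P]$. A strictly monotone function has at most one zero, so $P/2$ is the only interior symmetric equilibrium, and the same control of $\psi$ for the opponent's gain pair excludes any other symmetric profile.

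Finally I would dispose of the boundary profiles $p=0$ and $p=P$: evaluating the first-order condition at $p_i=p_{-i}=0$ gives $\tfrac12\E[a/N_{0} - a/(N_{0}+(a+b)P)]>0$, so a user whose opponent dumps all power into channel $2$ strictly prefers to move power into the interference-free channel $1$; hence $0$ is not a best response to $0$, and the channel-swap symmetry gives the same at $P$. Together these steps yield the unique equal-spread equilibrium $P_{11}^*=P_{12}^*=P_{21}^*=P_{22}^*=P/2$. I expect the monotonicity of $\psi$ to be the crux: existence at $P/2$ is immediate from the termwise cancellation, but ruling out lopsided symmetric equilibria demands the distribution-free sign/monotonicity control, and I would verify carefully that this control needs nothing about $F$ beyond positivity of the channel gains.
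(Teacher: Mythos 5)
Your proof is correct and follows essentially the same route as the paper's: strict concavity of the objective makes the first-order condition necessary and sufficient, and a distribution-free sign argument on that condition evaluated at a symmetric profile (your monotone $\psi$ with $\psi(P/2)=0$ is the paper's observation that the combined numerator $g_{ii}(P-2P_{i1})+g_{-i,i}(P-2P_{-i,1})$ has a fixed sign off $P/2$) forces the equal-split solution. Your added checks of the boundary profiles and of existence at $P/2$ are details the paper leaves implicit, but they do not change the argument.
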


 \begin{proof}
 Note that if $\v{P}^*$ is a NE, then (substituting the power constraint)
 we conclude $P_{i1}^*$ is a solution of the following maximization problem:
 %
 \begin{align*}
 \max_{P_{i1}}\int_G \left[\frac{1}{2} \log\left(1 +
   \frac{g_{ii}P_{i1}}{N_0 + g_{-i,i}P_{-i,1}}\right) + \frac{1}{2}
   \log\left(1 + \frac{g_{ii}(P - P_{i1})}{N_0 + g_{-i,i}(P -
     P_{-i,1})}\right)\right]\; f(\v{g})\; d \v{g}. 
 \end{align*}
 Since $\log(1 + x)$ is strictly concave in $x$, the first order
 conditions are necessary and sufficient to identify a NE.  
 Differentiating and simplifying yields:
 \[ \int_G \frac{g_{ii}}{2}\left(\frac{g_{ii}(P-2P_{i1}) +
   g_{-i,i}(P-2P_{-i,1})}{(N_0 + g_{ii}P_{i1} + g_{-i,i}P_{-i,1})(N_0 +
 g_{ii}(P - P_{i1} +  g_{-i,i}(P-P_{-i,1})))}\right)\; f(\v{g}) \;
   d\v{g} = 0. \]

 Note that the denominator in the integral above is always
 positive; and further, $g_{ii} > 0$ on $G$.  Thus in a NE, if $P_{i1}
 > P/2$, then we must have $P_{-i,1} < P/2$ (and vice versa).  Thus the
 only symmetric NE occur where $P_{i1} = P_{-i,1} = P/2$, as required.
 \end{proof}


While our result is framed with only two subchannels, the same
argument can be easily extended to the case of multiple subchannels
via induction. 



\begin{corollary}
Consider the UC-GI game with $K > 1$ subchannels.  There exists a
unique symmetric NE, where the two users spread their power equally
over all $K$ subchannels.
\label{Corr:UniquenessBCh}
\end{corollary}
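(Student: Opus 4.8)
The plan is to generalize the two-channel first-order argument of Theorem \ref{Thm:UniquenessBGIG} to a pairwise comparison across all $K$ subchannels, which is exactly the induction the remark preceding the corollary alludes to. First I would fix a symmetric pure-strategy NE $\v{P}^*$, so that $P_{ik}^* = P_{-i,k}^*$ for every channel $k$, and analyze user $i$'s best-response problem: maximize $\E[\Pi_i(\v{P}_i, \v{P}_{-i}^*)]$ over the simplex $\{P_{ik} \ge 0,\ \sum_k P_{ik} = P\}$. Because the expected utility separates into a sum of terms each depending on a single $P_{ik}$, and $\log(1+x)$ is strictly concave, the objective is strictly concave and separable; hence the KKT conditions are necessary and sufficient, and the best response is unique. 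This reduces the whole problem to reading off the stationarity conditions.

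The key step is the stationarity condition between an arbitrary pair of channels $k$ and $\ell$. Equalizing the marginal expected utilities gives
\begin{align*}
\int_G \frac{g_{ii}}{2}\left[\frac{1}{N_0 + g_{ii}P_{ik} + g_{-i,i}P_{-i,k}} - \frac{1}{N_0 + g_{ii}P_{i\ell} + g_{-i,i}P_{-i,\ell}}\right] f(\v{g})\, d\v{g} = 0,
\end{align*}
whose numerator, after combining the fractions, is $g_{ii}(P_{i\ell} - P_{ik}) + g_{-i,i}(P_{-i,\ell} - P_{-i,k})$ --- precisely the two-channel numerator appearing in the proof of Theorem \ref{Thm:UniquenessBGIG}, with the roles of channels $1,2$ played by $k,\ell$ and the local budget $P_{ik}+P_{i\ell}$ replacing $P$. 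I would then rerun the same sign argument: the denominators are positive and $g_{ii} > 0$ on $G$, so if $P_{ik} > P_{i\ell}$ one must have $P_{-i,k} < P_{-i,\ell}$ (otherwise the integrand is strictly negative throughout $G$ and the integral cannot vanish). Invoking symmetry $P_{-i,k}^* = P_{ik}^*$ immediately contradicts this unless $P_{ik}^* = P_{i\ell}^*$. Since $k,\ell$ were arbitrary, all channels carry equal power, i.e.\ $P_{ik}^* = P/K$.

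The step I expect to be the main obstacle is the boundary bookkeeping absent from the $K=2$ proof, which substituted the constraint and worked with a manifestly interior optimum. For $K > 2$ I must rule out starved channels $P_{i\ell}^* = 0$: there the stationarity equality is replaced by the KKT inequality (the marginal utility of an active channel is at least that of an inactive one), and I would show that at a symmetric configuration this inequality also forces a strictly-signed integrand and hence fails, so no channel is inactive. Finally I would verify existence directly: at the equal-spread profile every channel has identical marginal utility by symmetry, so the KKT conditions hold and it is indeed a NE. Combining, the equal-spread profile $P_{ik}^* = P/K$ is the unique symmetric NE. Equivalently, one can organize this as an induction on $K$: freezing the power in one channel reduces a symmetric NE of the $K$-channel game to a symmetric NE of the $(K-1)$-channel game with budget $P - P_{iK}^*$, to which the inductive hypothesis applies.
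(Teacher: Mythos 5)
Your proposal is correct and follows essentially the same route as the paper: the paper's proof restricts the symmetric NE to subsets of subchannels and invokes the $K=2$ case inductively, which is exactly your pairwise first-order comparison made explicit via the separability of the expected utility. The one place you go beyond the paper is the explicit KKT treatment of starved channels ($P_{i\ell}^* = 0$), a boundary case the paper's argument glosses over; that addition is sound.
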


 \begin{proof}
 The proof follows from an inductive argument; clearly the result holds
 if $K = 2$.  Let $\v{P}^K$ be a
 symmetric NE with $K$ subchannels.  Let $S \subset \{ 1, \ldots, K \}$
 be a subset of the subchannels.  Since the NE is symmetric, let
 $Q^S = \sum_{k \in S} P_{ik}^K$; this is the total power the
 players use in the  subchannels of $S$.  It is clear that if we
 {\em restrict} the power vector $\v{P}^K$ to only the 
 subchannels in $S$, then the resulting power vector must be a
 symmetric NE for the UC-GI game over only these 
 subchannels, with total power constraint $Q^{K-1}$.  Since this holds
 for every subset $S \subset \{ 1, \ldots, K \}$ of size $|S| \leq
 K-1$, we can apply the inductive hypothesis to conclude every user
 allocates equal power to each subchannel in the equilibrium $\v{P}^K$, as required.
 \end{proof}


While we have only shown uniqueness among {\em symmetric} NE in the
preceding results, we conjecture that in fact the {\em only} pure NE
of the UC-GI game is one where all players use equal transmit power in
every subchannel.  Our conjecture is motivated by numerical results
using {\em best response dynamics} for the UC-GI game; these are
dynamics where at each time step, each player plays a best response to
the action of his opponent at the previous time step.  As we see in
Figure \ref{Fig:BestResponse}, even if the users initially transmit at
different powers in each subchannel, the best response dynamics converge
to the symmetric NE.  In fact, for this numerical example the best
response dynamics verify uniqueness of the pure NE.\footnote{For the
UC-GI game, one can infer that for the numerical example with $K = 2$,
the unique pure NE is the symmetric NE where all users spread their
power across the subchannels.  To justify this claim, note that the
UC-GI game is a {\em supermodular game}, if the strategy spaces are
appropriately defined.  (A complete overview of supermodular games is
beyond the scope of this paper; for background on supermodular games,
see \cite{Topkis_1998, HBH_JSAC2006}.)  In particular, let $s_1 =
P_{11}$, and let $s_2 = -P_{21}$, with strategy spaces $S_1 = [0,P]$,
$S_2 = [-P,0]$.  Define $V_i(s_1, s_2) = \Pi_i(s_1, P - s_1, -s_2, P +
s_2)$.  Then it can be easily shown that $V_i$ has increasing
differences in $s_i$ and $s_{-i}$. This suffices to ensure that there
exists a ``largest'' NE $\overline{s}$, and a ``smallest'' NE
$\underline{s}$, that are, respectively, the least upper bound and
greatest lower bound to the set of NE in the product lattice $S_1
\times S_2$ \cite{MR_Econ1990}.  Further, best response dynamics
initiated at the smallest strategy vector $(s_1, s_2) = (0,-P)$
converge to $\underline{s}$; and best response dynamics initiated at
the largest strategy vector $(s_1, s_2) = (P,0)$ converge to
$\overline{s}$ \cite{MR_Econ1990}. Thus if these two best response
dynamics converge to the same strategy vector, there must be a unique
pure NE.}

\subsection{Bayesian Gaussian Interference Game}
\label{sec:pbgig}
In the UC-GI game defined above, we assume that each user is unaware
of all the channel gains. However in a slowly changing environment, it is common for the receiver to feed back
channel gain information to the transmitter. Thus, in this section 
we assume that each user $i$ is aware of the self channel gain $g_{ii}$,
and the incident channel gain $g_{-i,i}$; for notational simplicity,
let $\v{g}_i = (g_{ii}, g_{-i,i})$.  
However, because of the difficulties involved in dissemination of
channel state information from other devices, we continue to assume
that each user is unaware of the channel gains of the other users.  In
particular, this means user $i$ does not know the value $\v{g}_{-i}$.
In this game, each player chooses 
$\v{P}_i$ to maximize $\E[\Pi_i(\v{P}_i,
  \v{P}_{-i})|\v{g}_i]$, subject to the power constraint; note
that now the expectation is conditioned on $\v{g}_i$.
The power allocation $\v{P}_{-i}$ is random, since it depends on the
channel gains of player $-i$---which are unknown to player $i$.
Thus this is a {\em Bayesian game}, in which a strategy of player $i$ is
a family of functions $\v{s}_i(\v{g}_i) = (s_{i1}(\v{g}_i), \ldots,
s_{iK}(\v{g}_i))$, 
where $s_{ik}(\v{g}_i)$ gives the power allocation of player $i$ in
subchannel $k$ when gains $\v{g}_i$ are realized.  We refer to this game as the
{\em Bayesian Gaussian interference (BGI) game}.  A {\em Bayes-Nash
  equilibrium} BNE is a strategy vector $(\v{s}_1(\cdot),
\v{s}_2(\cdot))$ such that for each $i$ and each  
$\v{g}_i$, player $i$ has maximized his expected payoff given the
strategy of the opponent:
\[ \v{s}_i(\v{g}_i) \in \arg \max_{\v{P}_i} \E[\Pi_i(\v{P}_i,
  \v{s}_{-i}(\v{g}_{-i}))|\v{g}_i]. \]



For the BGI game, we again want to investigate symmetric Bayes-Nash
equilibria.  However, in principle the functional strategic form of a
player can be quite complex.  Thus, for analytical tractability, we
focus our attention on a restricted class of possible actions: we
allow users to either put their entire power in a single subchannel,
or split their power evenly across all subchannels.  This is a
practical subclass of actions which allows us to explore
whether asymmetric equilibria can exist.

Formally, the action space of both players is now restricted to $S =
\{ P \v{e}_1, \ldots, P \v{e}_K, P\v{1}/K \}$; here $\v{e}_i$ is the
standard basis vector with all zero entries except a ``1'' in the
$i$'th position, and $\v{1}$ is a vector where every entry is ``1''.
Thus $P \v{e}_k$ is the action that places all power in subchannel
$k$, while $P \v{1}/K$ spreads power equally across all subchannels.
A strategy for player $i$ is a
map that chooses, for each realization of $(g_{ii}, g_{-i,i})$, an
action in $S$. 

Our main result is the following theorem.

\begin{theorem}
Assume that $(g_{11}, g_{21})$ and $(g_{22}, g_{12})$ are i.i.d. Then
the unique pure strategy symmetric BNE of the BGI game is where both
users choose action $P \v{1}/K$, i.e., they spread their power
equally across bands.
\label{Thm:P-BGIG}
\end{theorem}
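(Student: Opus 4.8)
The plan is to reduce everything to a comparison of expected payoffs against a symmetric opponent, and then exploit strict concavity of $\log(1+\cdot)$. Fix the common (pure) strategy $\v{s}(\cdot)$ used by both players and let $q_k = \Pr[\v{s}(\v{g}) = P\v{e}_k]$ and $q_0 = \Pr[\v{s}(\v{g}) = P\v{1}/K]$ be the induced probabilities over the opponent's action. Because $F$ factors so that $\v{g}_i$ and $\v{g}_{-i}$ are independent, the opponent's action is independent of player $i$'s private information $(g_{ii},g_{-i,i})$, so these are exactly the probabilities player $i$ uses when forming $\E[\Pi_i \mid \v{g}_i]$; by the i.i.d. assumption they are common to both players. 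Conditioning on the opponent's realized action (which sets the opponent's transmit power in each subchannel to $0$, $P/K$, or $P$), I would write, for each realization $\v{g}_i$, closed forms for the conditional expected payoff $U_k$ of concentrating in subchannel $k$ and $U_0$ of spreading, each a convex combination of single-subchannel rates $\tfrac12\log(1 + g_{ii}P_{ik}/(N_0 + g_{-i,i}P_{-i,k}))$.

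First I would show the induced distribution is symmetric across subchannels. Subtracting the closed forms gives $U_k - U_{k'} = \tfrac12(q_k - q_{k'})(A - C)$, where $A = \log(1 + g_{ii}P/(N_0 + g_{-i,i}P))$ and $C = \log(1 + g_{ii}P/N_0)$ satisfy $A < C$ since $g_{-i,i} > 0$. Hence if $q_k > q_{k'}$ then $U_k < U_{k'}$ for every realization, so subchannel $k$ is strictly worse than $k'$ among the concentrate-options and is therefore never a best response; the induced probability on $P\v{e}_k$ must then be zero, contradicting $q_k > q_{k'} \ge 0$. Consequently $q_1 = \cdots = q_K =: q$, with $Kq + q_0 = 1$.

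Next, with all $q_k$ equal, I would show spreading strictly dominates concentrating at every realization. Writing $U_c$ for the common value of the $U_k$, a short calculation gives
\[ U_0 - U_c = \frac{q}{2}\Big[(KD - A) + (K-1)(KE - C)\Big] + \frac{q_0}{2}(KH - B), \]
where $B,D,E,H$ are the corresponding single-subchannel log-rates at the various interference levels. Each bracketed term is positive by the elementary inequality $(1 + x/K)^K > 1 + x$ for $x>0$ (equivalently $K\log(1 + (s/K)/\eta) > \log(1 + s/\eta)$), applied at the three effective noise levels $\eta \in \{N_0,\, N_0 + g_{-i,i}P/K,\, N_0 + g_{-i,i}P\}$. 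Since $q,q_0 \ge 0$ and $Kq + q_0 = 1$, not both weights vanish, so $U_0 > U_c$ for every $\v{g}_i$. Thus the unique best response to any channel-symmetric profile is to spread, forcing $q = 0$; the only self-consistent value is $q_0 = 1$. Setting $q_0 = 1$ in the same inequality ($U_0 - U_c = \tfrac12(KH - B) > 0$) also verifies that all-spread is a genuine BNE, giving both existence and uniqueness.

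The main obstacle is that spreading is \emph{not} a dominant action: against a fixed, lopsided interference profile (say the opponent dumping all its power into one subchannel) a player strictly prefers to grab a clean subchannel, so $U_0 > U_c$ simply fails realization-by-realization for arbitrary opponent play. The real content is therefore the two-stage structure --- first pinning down channel symmetry of the induced distribution (so that, in expectation, every subchannel looks alike), and only then invoking concavity. The delicate point is that averaging over the opponent's symmetric randomization is exactly what converts the concavity gain into a strict advantage for spreading, an advantage that does not exist against any single fixed interference profile.
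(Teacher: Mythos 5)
Your proof is correct and follows essentially the same route as the paper's: first a payoff-comparison lemma showing the difference between concentrating in subchannels $k$ and $k'$ is proportional to $q_{k'}-q_k$ (forcing all concentration probabilities equal in a symmetric equilibrium), then the concavity inequality $K\log(1+x/K)>\log(1+x)$ applied at each effective noise level to show spreading strictly beats concentrating. Your explicit use of the independence of $\v{g}_i$ and $\v{g}_{-i}$, and your closing remark that the advantage of spreading only emerges after averaging over the opponent's symmetric randomization, are welcome clarifications of points the paper leaves implicit, but they do not change the argument.
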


\begin{proof}
Fix a symmetric BNE $(\v{s}_1, \v{s}_2)$ where $s_{1k}(\cdot) =
s_{2k}(\cdot) = s^k(\cdot)$ is the common strategy used by both players;
i.e., given channel gains $\v{g}_i$, player $i$ puts power
$s^k(\v{g}_i)$ in subchannel $k$.  Define $\alpha_k = \P(s^k(\v{g}_i)
= P \v{e}_k)$ for each subchannel $k$; and $\gamma = \P(s^k(\v{g}_i) =
P \v{1}/K)$.  These are the probabilities that a player
transmits with full power in subchannel $k$, or with equal power in
all subchannels, respectively. 

Let $\overline{\Pi}_i(\v{P}_i ; \v{g}_i)$ be the expected payoff of user $1$ if it uses
action $\v{P}_i \in S$, given that the other player is using the
equilibrium strategy profile $(s^1, \ldots, s^K)$ and the channel
gains are $\v{g}_i$.  We start with the following lemma. 

\begin{lemma}
For two subchannels $k, k'$, if $\alpha_k < \alpha_{k'}$, then 
$\overline{\Pi}_i(P \v{e}_k ; \v{g}_i) > \overline{\Pi}_i(P \v{e}_{k'}
; \v{g}_i)$ for all values of $\v{g}_i$; i.e., player $i$
strictly prefers to put full power into subchannel $k$ over putting full
power into subchannel $k'$.
\label{Lemma:PayoffComparison}  
\end{lemma}

{\em Proof of Lemma}.  
Using \eqref{Eq:UtilityFunctions} we can write $\overline{\Pi}_{i}(P \v{e}_k
; \v{g}_i)$ as: 
\begin{align*} 
\overline{\Pi}_{i}(P \v{e}_k ; \v{g}_i) &= \frac{\alpha_k}{2}\left[\log\left(1 +
  \frac{g_{ii}P}{N_{0} + g_{-i,i}P}\right)\right] +
\frac{\gamma}{2}\left[\log\left(1 + \frac{g_{ii}P}{N_{0} +
    g_{-i,i}P/B}\right)\right] \\ 
&+ \frac{1 - \alpha_k - \gamma}{2}\left[\log\left(1 +
  \frac{g_{ii}P}{N_{0}}\right)\right].
\end{align*}
Define $\Delta$ as:
\begin{align*}
\Delta \triangleq \frac{1}{2}\log\left(1 +
\frac{g_{ii}P}{N_{0}}\right) - \frac{1}{2}\log\left(1 +
\frac{g_{ii}P}{N_{0} + g_{-i,i}P}\right) > 0,
\end{align*}
since we have assumed $g_{-i,i} > 0$.  Rearranging and simplifying, we have that 
$\overline{\Pi}_{i}(P \v{e}_k ; \v{g}_i) - \overline{\Pi}_{i}(P\v{e}_{k'} ; \v{g}_i) = \Delta\left(\alpha_{k'} - \alpha_k\right).$ 
Since $\Delta > 0$ and $\alpha_{k'} > \alpha_k$, the lemma is proved.\hfill$\Box$

The previous lemma ensures that in a symmetric equilibrium we cannot
have $\alpha_k < \alpha_{k'}$ for any two subchannels $k, k'$: in this
case, $\alpha_{k'} > 0$, so 
the equilibrium strategy 
puts positive weight on action $P \v{e}_{k'}$; but player $i$'s best response
to this strategy puts zero weight on $P \v{e}_{k'}$ (from the lemma).

Thus in a symmetric equilibrium we must have $\alpha_k = \alpha_{k'}$
for all subchannels $k, k'$; i.e., $\alpha_k = (1 -\gamma)/K$ for all
$k$.  Define $\alpha = (1 - \gamma)/K$.  In this case we have for each
subchannel $k$:
\begin{align*}
\overline{\Pi}_{i}(P \v{e}_k ; \v{g}_i) =
&\alpha\left[\frac{K-1}{2}\log\left(1 + 
\frac{g_{ii}P}{N_{0}}\right) + \frac{1}{2}\log\left(1 +
\frac{g_{ii}P}{N_{0} + g_{-i,i}P}\right)\right] \nonumber \\
 &+ \gamma\left[\frac{1}{2}\log\left(1 + \frac{g_{ii}P}{N_{0} +
g_{-i,i}P/K}\right)\right],
\end{align*}
and 
\begin{align*}
\overline{\Pi}_{i}(P \v{1}/K) =& (1-\gamma)
\left[\frac{K-1}{2}\log\left(1 + \frac{g_{ii}P/K}{N_{0}}\right) +
  \frac{1}{2}\log\left(1 + \frac{g_{ii}P/K}{N_{0} +
    g_{-i,i}P}\right)\right]\\ 
& + \gamma\left[\frac{K}{2}\log\left(1 + \frac{g_{ii}P/K}{N_{0} + g_{-i,i}P/K}\right)\right].
\end{align*}
Since $\log(1 + x)$ is a strictly concave function of $x$, we get that $K\log(1
+ \frac{x}{K}) > \log(1 + x)$ for $x > 0$.  Since $\alpha = (1 -
\gamma)/2$, this implies that
$\overline{\Pi}_{i}(P \v{1}/K) > \overline{\Pi}_{i}(P \v{e}_k)$ for
all subchannels $k$.  Thus in a symmetric equilibrium, we must have
$\alpha = 0$; i.e., the unique symmetric equilibrium occurs where
$\gamma =1$, so both users equally spread their power across all
subchannels.
\end{proof}


Thus far we have considered games where players act simultaneously. However in several
practical cases, one of the players may already be using the spectrum when another 
user wants to enter the same band. We model such scenarios as sequential games in the next section.

\section{Sequential Interference Games with Incomplete Information}
\label{StackelbergGames}

In this section, we study sequential games between wireless devices
with incomplete information.  In such games, player 1, who
we refer to as the \textit{primary user}, determines its transmission
strategy before player 2.  Player 2, also referred to
as the \textit{secondary user}, observes the action of the primary
user and chooses a transmission strategy that is a best response.  We
study this model in Section \ref{sec:2StageGame}.  While we focused on
symmetric equilibria of static games in the preceding section, we
focus here on the fact that sequential games naturally allow the users
to sustain asymmetric equilibrium.  We characterize how these
equilibria depend on the realized channel gains of the users.

Such games are a natural approach to study dynamic spectrum sharing between
cognitive radios.  The primary user is the incumbent user of the band,
while the secondary user represents a potential new device that also
wishes to use spectrum in the band.  In particular, in this model we
must also study {\em whether} the secondary device would find it
profitable to compete with the primary in the first place. In Section
\ref{sec:SeqGamesEntry}, we extend this game to incorporate an entry
decision by the secondary user, and again study dependence of the
equilibria on realized channel gains.

\subsection{A Two-Stage Sequential Game}
\label{sec:2StageGame}

In this section, we consider a two-stage \textit{sequential Bayesian
Gaussian interference (SBGI)} game; we restrict attention to two
subchannels for simplicity.  Player 1 (the primary) moves first, and
Player 2 (the secondary) {\em perfectly} observes the action of the
primary user before choosing its own action.  We assume that each of
the self gains $g_{ii}$ (in the interference channel given in
\eqref{Eq:InterferenceCh}) are normalized to $1$; this is merely done
to isolate and understand the effects of interference on the
interaction between devices, and does not significantly constrain the
results.

As before, the channel gains are randomly selected.  For the
remainder of this section, we make the following assumption. 
\begin{assumption}
Player $1$ (the primary user) knows both $g_{12}$ and $g_{21}$, while the
secondary user only knows $g_{12}$ (but not $g_{21}$).
\label{As:informationSets}
\end{assumption}
Thus the primary user knows the interference it causes to the
secondary user; the secondary user however, is only aware of its own
incident channel gain.  This approach allows us to focus on the
secondary user's uncertainty; it is also possible to analyze the same
game when the primary user does not know $g_{12}$ (see appendix).

As before, we restrict the action space of each user to either use
only one of the subchannels, or to spread power equally in both
subchannels.  We assume that if the primary concentrates power in a
single subchannel, it chooses subchannel $1$; this is done without
loss of generality, since fading is flat, and the primary moves first.
If the primary user chooses to place its entire power in subchannel
$1$, then from Lemma \ref{Lemma:PayoffComparison}, we know that the
secondary user's best response puts zero weight on this action: the
secondary user will either spread over both subchannels, or put its
entire power in the free subchannel.  Thus concentrating power in a
single subchannel is tantamount to {\em sharing} a single subchannel
with the secondary.  Thus we say the primary ``shares'' if it
concentrates all its power in a single subchannel, and denote this
action by $SH$.  We use $SP$ to denote the action where the primary
spreads its power across both subchannels.  We also use the same
notation to denote the actions of the secondary: concentrating power
in a single subchannel  (in this case, subchannel $2$) is denoted by 
$SH$, and spreading is denoted by $SP$. The game tree for the SBGI game 
is shown in Figure \ref{Fig:GameTreeCombined}.



We solve for the equilibrium path in the sequential game using
backward induction.  Once channel gains are realized, suppose that the
primary player chooses the action $SP$.  Conditioned on this action by
the primary player, the secondary player chooses the action $SP$ if
\begin{align}
\log\left(1 + \frac{P/2}{N_{0} + g_{12}P/2}\right) > \frac{1}{2}\log\left(1 + \frac{P}{N_{0} + g_{12}P/2}\right).
\end{align}
Since $\log(1 + x)$ is a strictly concave function of $x$, the above
inequality holds for all values of $g_{12}$. Thus the best
response of the secondary user is to choose $SP$ whenever the primary
user chooses $SP$, regardless of the value of $g_{12}$.

The situation is more interesting if the primary user decides to share
the bandwidth, i.e., chooses action $SH$.  Given this action
of the primary user, the secondary user would prefer to spread its
power if and only if:
\begin{align}
\frac{1}{2}\log\left(1 + \frac{P}{2N_{0}}\right) &+ \frac{1}{2}\log\left(1 + \frac{P/2}{N_{0} + g_{12}P}\right) > \frac{1}{2}\log\left(1 + \frac{P}{N_{0}}\right) \ \Longleftrightarrow \ g_{12} < \frac{1}{2}.
\label{Eq:g12Threshold}
\end{align}
Thus if $g_{12} < 1/2$, the secondary will choose $SP$ in response to
$SH$.  On the other hand, if $g_{12} > 1/2$, the secondary user will
choose $SH$ in response to $SH$.  (We ignore $g_{12} = 1/2$ since
channel gains have continuous densities.)  We summarize our
observations in the next lemma.

\begin{lemma}
Suppose Assumption \ref{As:informationSets} holds.  In the SBGI game, if the
primary user chooses $SP$ in the first stage, the best response of the
secondary is $SP$; if the primary chooses $SH$ in the first stage, the
best response of the secondary is $SH$ if $g_{12} > 1/2$, and $SP$ if
$g_{12} < 1/2$.
\label{Lemma:SBGI}
\end{lemma}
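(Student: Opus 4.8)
The plan is to prove the lemma by treating the secondary's decision as a purely deterministic best-response comparison, case by case on the primary's observed first-stage action. The key preliminary observation is that, although this is a game of incomplete information, the secondary's payoff $\Pi_2$ in \eqref{Eq:UtilityFunctions} depends only on its own power allocation, the power allocation of the primary (which it observes perfectly), the self-gain $g_{22}=1$, and the incident gain $g_{12}$ (which it knows by Assumption \ref{As:informationSets}); it does \emph{not} depend on the unobserved gain $g_{21}$. Hence no expectation is required: conditioned on the primary's move, the secondary simply selects whichever of its two feasible actions $SP$ and $SH$ yields the larger deterministic payoff.

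For the case in which the primary plays $SP$, both subchannels carry primary power $P/2$, so the secondary faces interference $g_{12}P/2$ in each. Writing out its payoff under $SP$ and under $SH$ (all power in subchannel $2$), the comparison reduces to showing $\log(1 + x/2) > \tfrac{1}{2}\log(1+x)$ with $x = P/(N_0 + g_{12}P/2) > 0$. I would close this by strict concavity of $\log(1+\cdot)$ — equivalently, $(1+x/2)^2 = 1 + x + x^2/4 > 1+x$ — which holds for every $x>0$ and hence for every $g_{12}$. This establishes the first assertion: $SP$ is the unique best response to $SP$, independent of the realized gain.

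For the case in which the primary plays $SH$, it concentrates all power $P$ in subchannel $1$ and leaves subchannel $2$ interference-free. Then the secondary's $SP$ payoff is $\tfrac{1}{2}\log(1 + P/(2N_0)) + \tfrac{1}{2}\log(1 + (P/2)/(N_0 + g_{12}P))$, while its $SH$ payoff (all power in the clean subchannel $2$) is $\tfrac{1}{2}\log(1 + P/N_0)$; this is precisely the comparison in \eqref{Eq:g12Threshold}. I would exponentiate the inequality, clear the positive denominator $2N_0(N_0 + g_{12}P)$, and expand both products: after cancellation of the quadratic-in-$N_0$ and cross terms the difference of the two sides collapses to $(\tfrac{1}{2} - g_{12})P^2$. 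Its sign is therefore governed entirely by whether $g_{12}$ lies below or above $1/2$, yielding $SP$ as the best response when $g_{12} < 1/2$ and $SH$ when $g_{12} > 1/2$ (the boundary $g_{12}=1/2$ is a null event, since the gains have continuous densities).

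The only genuine work is the algebra of this second case: organizing the cross-multiplication so that the high-order and cross terms cancel and the residual reduces to the clean expression $(\tfrac12 - g_{12})P^2$, from which the threshold $1/2$ emerges exactly. Everything else follows immediately from concavity of $\log(1+\cdot)$ and from the fact that the secondary's uncertainty about $g_{21}$ is payoff-irrelevant, so that its decision problem is effectively one of complete information.
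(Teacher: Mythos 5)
Your proposal is correct and follows essentially the same route as the paper's own derivation (the backward-induction comparison preceding Lemma \ref{Lemma:SBGI}): concavity of $\log(1+\cdot)$ disposes of the case where the primary plays $SP$, and the cross-multiplication in \eqref{Eq:g12Threshold} indeed reduces to a term proportional to $(\tfrac{1}{2}-g_{12})P^2$, giving the threshold. Your explicit preliminary remark that the secondary's payoff is independent of the unobserved gain $g_{21}$ (so no expectation is needed) is a point the paper leaves implicit, but it is the same argument.
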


Thus in equilibrium, regardless of the channel gains, either both
users share or both users spread.  Suppose the secondary plays $SP$
regardless of the primary's action; in this case $SP$ is also the best
action for the primary.  Thus if $g_{12} < 1/2$, the primary user will
never choose $SH$ in the first stage.  On the other hand, when $g_{12}
> 1/2$, it can choose its optimal action by comparing the payoff when
{\em both} players choose $SP$ to the payoff when {\em both} players
choose $SH$.  In this case, the primary user would prefer to spread
its power if and only if:
\begin{align}
\!\log\left(1 + \frac{P/2}{N_{0} + g_{21}P/2}\right) &> \frac{1}{2}\log\left(1 + \frac{P}{N_{0}}\right), \ 
\Longleftrightarrow\ g_{21} < \frac{1}{\sqrt{1 + P/N_{0}} - 1} &- \frac{2N_{0}}{P} \triangleq g^{*}.
\label{Eq:g21Threshold}
\end{align}

Observe that this threshold approaches zero as $P/N_0 \to \infty$, and
$1/2$ as $P/N_0 \to 0$. (In fact, the threshold is a decreasing function of $P/N_0$.)
 Thus we have the following proposition.


\begin{proposition}
Suppose Assumption \ref{As:informationSets} holds. In the sequential equilibria of the game,\footnote{\label{foot:seqeq}Sequential equilibrium is a standard solution
concept for dynamic games of incomplete information
\cite{KW_ECONOMETRICA1982}.  Sequential equilibrium consists of two
elements for each user: a history- and type-dependent strategy, as
well as a conditional distribution (or {\em belief}) over the unknown types of other
players given history.  Two conditions must be satisfied; first, for
each player, the strategy maximizes expected payoff over the remainder
of the game (i.e., the strategy is \textit{sequentially rational}).
Second, the beliefs are {\em consistent}: players compare observed history
to the equilibrium strategies, and use Bayesian updating to specify their conditional
distribution over other players' types.  The precise definition is
somewhat more involved, and beyond the scope of this paper.

In our scenario, the primary knows $g_{12}$, and thus has no
uncertainty.  The secondary user's belief of the value of $g_{21}$ is updated at the second
stage on the basis of the primary's action at the first stage;
however, since the secondary user's action does not depend on the
value of $g_{21}$, we do not specify beliefs in the statement of the proposition.}
 if $g_{12} > 1/2$ and $g_{21} > g^{*}$, the
 primary user chooses $SH$; if $g_{12} < 1/2$ or $g_{21} < g^*$, the primary user chooses $SP$.
 The secondary user plays the same action as that chosen by the
 primary, regardless of the realized channel gain.
 The value of $g^{*}$ is given in \eqref{Eq:g21Threshold}.
\label{Thm:NE-SGI}
\end{proposition}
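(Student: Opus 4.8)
The plan is to solve the game by backward induction, which for a two-stage sequential game of this form pins down the entire set of sequential equilibria. First I would take the secondary user's second-stage behavior as already determined by Lemma \ref{Lemma:SBGI}: the secondary plays $SP$ whenever the primary plays $SP$, and following a primary $SH$ it plays $SH$ if $g_{12} > 1/2$ and $SP$ if $g_{12} < 1/2$. With the second stage expressed as a function of $g_{12}$ and of the primary's action, it remains only to determine the primary's optimal first move, anticipating this response. Since the primary observes both $g_{12}$ and $g_{21}$, its first-stage problem is a deterministic comparison of two scalar payoffs, and the analysis splits naturally according to whether $g_{12}$ lies below or above $1/2$.

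In the regime $g_{12} < 1/2$ the secondary spreads regardless of the primary's action, so the primary faces a spreading opponent in both branches. I would then compare the primary's payoff from $SP$, namely $\log(1 + \frac{P/2}{N_0 + g_{21}P/2})$, against its payoff from $SH$ against a spreading secondary, namely $\frac{1}{2}\log(1 + \frac{P}{N_0 + g_{21}P/2})$. Writing $a = N_0 + g_{21}P/2$, the inequality $\log(1 + (P/2)/a) > \frac{1}{2}\log(1 + P/a)$ reduces after exponentiating to $(a + P/2)^2 > a(a+P)$, i.e. to $P^2/4 > 0$, which holds for every value of $g_{21}$. Hence the primary strictly prefers $SP$ throughout this regime, matching the claim that it chooses $SP$ whenever $g_{12} < 1/2$.

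In the regime $g_{12} > 1/2$ the secondary mirrors the primary, so the relevant comparison is between the outcome where both spread, giving the primary $\log(1 + \frac{P/2}{N_0 + g_{21}P/2})$, and the outcome where both share disjoint subchannels with no mutual interference, giving the primary $\frac{1}{2}\log(1 + P/N_0)$. The threshold separating these two outcomes is precisely the content of \eqref{Eq:g21Threshold}: the primary prefers to spread iff $g_{21} < g^*$ and to share iff $g_{21} > g^*$. Combining the two regimes yields the stated decision rule for the primary, and the secondary's action then follows directly from Lemma \ref{Lemma:SBGI}.

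Finally, I would confirm that this profile is a sequential equilibrium in the sense of footnote \ref{foot:seqeq}. Sequential rationality is immediate, since backward induction constructs each action as a best response at every decision node. Belief consistency, the only remaining requirement, is essentially vacuous here: the secondary's optimal action depends only on the observed gain $g_{12}$ and never on the unknown gain $g_{21}$, so no nontrivial posterior over the primary's type enters the secondary's decision. I expect the main point requiring care to be not any single calculation but the payoff bookkeeping in the $g_{12} < 1/2$ case, where one must keep in mind that $SH$ for the primary does \emph{not} remove interference, since the secondary still spreads across both subchannels; the comparison there collapses to the concavity inequality $P^2/4 > 0$, rather than to the disjoint-subchannel comparison governed by \eqref{Eq:g21Threshold} that applies when $g_{12} > 1/2$.
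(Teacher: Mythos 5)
Your proof is correct and follows essentially the same route as the paper: backward induction, taking the secondary's second-stage responses from Lemma \ref{Lemma:SBGI} and then reducing the primary's first move to a deterministic payoff comparison, with the threshold $g^*$ of \eqref{Eq:g21Threshold} governing the $g_{12}>1/2$ regime. The only difference is that you make explicit the computation for the $g_{12}<1/2$ case (the $(a+P/2)^2 > a(a+P)$ comparison), which the paper merely asserts via concavity of $\log(1+x)$.
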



%

Since in equilibrium, either both users share or both users spread,
for later reference we make the following definitions:
\begin{align}
\Pi_i^{\text{share}} = \frac{1}{2} \log \left( 1 + \frac{P}{N_0}
\right); 
\quad 
\Pi_i^{\text{spread}} = \log \left( 1 + \frac{P/2}{N_0 + g_{-i,i} P/2}
\right). \label{Eq:JointRates2}
\end{align}
These are the payoffs to user $i$ if {\em both} users play $SH$, or
both play $SP$, respectively.  Note that $\Pi_i^{\text{spread}}$
depends on $g_{-i,i}$, and is thus stochastic.

\subsection{A Sequential Game with Entry}
\label{sec:SeqGamesEntry}

In this section, we modify the game of the last section
to incorporate an entry decision by the secondary user; we refer to
this as the \textit{sequential Bayesian Gaussian interference game with entry
(SBGI-E)}. Specifically, at the beginning of the game, the secondary
user decides to either enter the system (this action is denoted by
$N$) or stay out of the system (denoted by $X$). If the secondary user
decides to exit, the primary user has no action to take. If however,
the secondary user enters the system, the two users play the same game
as described in the previous section. The game tree of the SBGI-E game
is shown in Figure \ref{Fig:GameTreeCombined}.

If the secondary user exits the game, its payoff (defined as the
maximum data rate received) is $0$. However, if the secondary user
enters the game, the payoff to the user is always positive, regardless
of the action taken by the primary user and the channel gains. Thus,
in the absence of any further assumption, the model trivially reduces to
the one studied in the previous section. To make the model richer, we
introduce a {\em cost of power} to the overall payoff of the
secondary user.\footnote{This cost of power can also be introduced for
the primary user without changing any results mentioned in this
paper. We avoid this for the sake of simplicity.}  This cost of power
can represent, for example, battery constraints of the wireless
device.  We assume that if the secondary enters, a cost $kP$ is
incurred (where $k$ is a proportionality constant). With this cost of
power, if the secondary user enters it obtains payoff 
$\hat{\Pi}_{2}(\v{P}_1, \v{P}_2) = \Pi_2(\v{P}_1, \v{P}_2)- kP$. 
Note that if the secondary user exits the game, it gets no rate with
no cost of power; in this case its payoff is zero.  Furthermore, from
Proposition \ref{Thm:NE-SGI}, we know that if the 
secondary user enters in equilibrium, it will obtain either
$\Pi_2^{\text{share}} - kP$ or $\Pi_2^{\text{spread}} -kP$.  Thus, to decide
between entry and exit, the secondary user compares these quantities
to zero.  In the case where $g_{12} < 1/2$, we easily obtain the
following proposition.

\begin{proposition}
Suppose Assumption \ref{As:informationSets} holds, and that $g_{12}< 1/2$.  In
the sequential equilibrium of the
SBGI-E game, the secondary player always enters if
$\Pi_2^{\text{spread}} > kP$ and it always exits if
$\Pi_2^{\text{spread}} < kP$. 
\end{proposition}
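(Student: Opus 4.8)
The plan is to solve the game by backward induction, collapsing the unique continuation equilibrium of the underlying SBGI game into a single entry-versus-exit comparison for the secondary user. First I would fix the realized channel gains with $g_{12} < 1/2$ and determine the continuation play under the hypothesis that the secondary has already entered. By Proposition \ref{Thm:NE-SGI}, whenever $g_{12} < 1/2$ the primary user chooses $SP$ in the first stage (regardless of the value of $g_{21}$), and by Lemma \ref{Lemma:SBGI} the secondary's best response to $SP$ is again $SP$. Hence, conditioned on entry, the outcome is that both users spread, so the secondary's continuation payoff is exactly $\Pi_2^{\text{spread}} - kP$, where $\Pi_2^{\text{spread}} = \log\left(1 + \frac{P/2}{N_0 + g_{12}P/2}\right)$ depends only on $g_{12}$.

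Next I would analyze the entry node itself. By definition the secondary's payoff from exiting is $0$, while its payoff from entering equals the continuation value $\Pi_2^{\text{spread}} - kP$ computed above. Under Assumption \ref{As:informationSets} the secondary observes $g_{12}$, so it can evaluate $\Pi_2^{\text{spread}}$ exactly before deciding. Sequential rationality at the entry decision then reduces to a direct scalar comparison: the secondary enters iff $\Pi_2^{\text{spread}} - kP > 0$, i.e.\ $\Pi_2^{\text{spread}} > kP$, and exits iff $\Pi_2^{\text{spread}} < kP$. The knife-edge case $\Pi_2^{\text{spread}} = kP$ can be ignored since the channel gains have continuous densities.

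Finally I would verify that the beliefs required for a sequential equilibrium create no complication here, which I expect to be the only (minor) obstacle. The sole payoff-relevant uncertainty for the secondary concerns $g_{21}$, which it does not observe; but because $g_{12} < 1/2$ forces the primary to play $SP$ \emph{irrespective} of $g_{21}$, and the secondary's own best response likewise does not depend on $g_{21}$, the continuation value $\Pi_2^{\text{spread}} - kP$ is independent of the secondary's beliefs about $g_{21}$. Consequently any consistent belief system supports the same entry decision, and the comparison in the previous paragraph is genuinely against a fixed number rather than an expectation over the continuation. The care needed is precisely in confirming that the continuation equilibrium value is both unique and deterministic given $g_{12}$ — which follows directly from the unconditional prediction of Proposition \ref{Thm:NE-SGI} in the regime $g_{12} < 1/2$ — so that sequential rationality at the entry stage is immediate.
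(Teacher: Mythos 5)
Your proposal is correct and follows essentially the same route as the paper: invoke Proposition \ref{Thm:NE-SGI} to conclude that for $g_{12}<1/2$ both users spread post-entry, so the secondary's entry decision reduces to comparing the known quantity $\Pi_2^{\text{spread}}-kP$ with zero. The paper states this in one line; your additional verification that beliefs about $g_{21}$ are payoff-irrelevant in this regime is a sound (if unneeded for the paper's purposes) elaboration of the same argument.
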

\begin{proof}
The proof follows trivially from Theorem \ref{Thm:NE-SGI} since if
$g_{12} < 1/2$, both the primary and the secondary user spread their
powers after entry.
\end{proof}

Let us now consider the case when $g_{12} > 1/2$.  Since the game is
symmetric, we conclude that $\Pi_2^{\text{share}} >
\Pi_2^{\text{spread}}$ if and only if $g_{12} > g^*$.  A
straightforward calculation using the expression in
\eqref{Eq:g21Threshold} establishes that $g^* < 1/2$.  Thus if $g_{12}
> 1/2$, then $g_{12} > g^*$; in particular, for $g_{12} > 1/2$, there
always holds $\Pi_2^{\text{share}} >
\Pi_2^{\text{spread}}$. For the sake of simplicity, we make the following assumption for the
remainder of the paper.




\begin{assumption}
The payoff to the secondary user $\Pi_2^{\text{share}}$ is greater
than the cost of power $kP$; i.e., $P/N_0 > 2^{2kP} - 1$.
\label{As:SignOfb}
\end{assumption}
We now compare $\PiSecondary^{\text{spread}}$ to the cost of power
$kP$. Under Assumption \ref{As:SignOfb}, the secondary user would
always enter if $\PiSecondary^{\text{spread}} > kP$.  This happens if
and only if:
\begin{align}
\log\left(1 + \frac{P/2}{N_{0} + g_{12}P/2}\right) > kP \quad 
\Longleftrightarrow \ g_{12} < \frac{1}{2^{kP} - 1} - \frac{2N_{0}}{P} \triangleq \tilde{g}_{12}.
\label{Eq:g12Tilde}
\end{align}
Thus if $1/2 < g_{12} < \tilde{g}_{12}$, the secondary user always enters.  
Note that $\tilde{g}_{12} \to -\infty$ if $N_0 \to \infty$, for fixed
$P$. For fixed $N_0$, we have $\tilde{g}_{12} \to 0$ as $P \to
\infty$; and a straightforward calculation\footnote{To see this, let $c = 2 k N_0 \ln 2$ and $x = P/(2N_0)$, and note
that:
\[ \tilde{g}_{12} = \frac{1}{e^{cx} - 1} - \frac{1}{x}. \]
Note that $e^{cx} = 1 + cx + c^2x^2/2 + e(x)$, where $e(x) = o(x^2)$.
The result follows by substituting this Taylor expansion in the above
expression for $\tilde{g}_{12}$, and considering the cases $c < 1$, $c
> 1$, and $c = 1$, respectively.}
shows that:
\[
\lim_{P \to 0} \tilde{g}_{12} =\left\{ \begin{array}{ll}
\infty,& \ \text{if}\ 2kN_0 \ln 2 < 1;\\
-\infty,&\ \text{if}\ 2kN_0 \ln 2 > 1;\\
-1/2,&\ \ \text{if}\ 2kN_0 \ln 2 = 1.
\end{array}
\right. \]
In particular, $\tilde{g}_{12}$ can take any real value depending on
the parameters of the game.  

If $g_{12} > \tilde{g}_{12}$, then $\Pi_2^{\text{spread}} < kP <
\Pi_2^{\text{share}}$, so the secondary would only enter if
the primary user shares the channel. Let $\rho = \P(g_{21} <
g^{*})$ be the probability that the primary user spreads the
power. Then the expected payoff of the secondary user on entry is 
$\overline{\Pi}_{2} = \rho \PiSecondary^{\text{spread}} + (1-\rho)\PiSecondary^{\text{share}} - kP$.
The secondary user would enter if its expected payoff is
positive. This happens if  and only if:
\begin{align}
\rho < \frac{\Pi_2^{\text{share}} - kP}{\PiSecondary^{\text{share}} - \PiSecondary^{\text{spread}}} \triangleq d.
\label{Eq:ProbCutoff}
\end{align}

The equilibria of the SBGI-E for $g_{12} > 1/2$ are summarized in the
following proposition.   
\begin{proposition}
Suppose Assumptions \ref{As:informationSets} and \ref{As:SignOfb}
hold, and that $g_{12} > 1/2$.  Define $d$ as in
\eqref{Eq:ProbCutoff}, and $\rho = \P(g_{21} < g^*)$. Then in the
sequential equilibrium of the game, if $g_{12} < \tilde{g}_{12}$, the secondary user always enters the game; 
if however $g_{12} > \tilde{g}_{12}$ the secondary user enters the game if $\rho < d$, and it exits the game if $\rho > d$. Upon entry the primary and the secondary user follow the sequential equilibrium of the SBGI game as given in Proposition \ref{Thm:NE-SGI}.
\label{Thm:NE-SGI-E}
\end{proposition}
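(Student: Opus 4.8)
The plan is to prove the proposition by backward induction, leaning almost entirely on the subgame analysis already completed in Proposition \ref{Thm:NE-SGI} together with the threshold computations preceding the statement. Since $g_{12} > 1/2$ is fixed throughout, the continuation play after entry is fully pinned down: by Proposition \ref{Thm:NE-SGI} the primary plays $SH$ exactly when $g_{21} > g^*$ and $SP$ exactly when $g_{21} < g^*$, and by Lemma \ref{Lemma:SBGI} the secondary matches the primary's action in either case. Hence the continuation payoff to the secondary upon entry is $\Pi_2^{\text{share}} - kP$ when $g_{21} > g^*$ and $\Pi_2^{\text{spread}} - kP$ when $g_{21} < g^*$, and the on-path actions are precisely the SBGI equilibrium of Proposition \ref{Thm:NE-SGI}; this disposes of the final sentence of the claim.

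Next I would reduce the entry decision to a sign comparison. The crucial modeling point is that, by Assumption \ref{As:informationSets}, the secondary does not observe $g_{21}$, and at the entry node there is no prior action of the primary on which to condition; hence the secondary's belief is simply the prior distribution of $g_{21}$, and the probability it assigns to the primary spreading is exactly $\rho = \P(g_{21} < g^*)$. Averaging the two continuation payoffs against this belief gives the expected entry payoff $\overline{\Pi}_2 = \rho\,\Pi_2^{\text{spread}} + (1-\rho)\,\Pi_2^{\text{share}} - kP$, as already recorded in the text. Sequential rationality then dictates that the secondary enters if and only if $\overline{\Pi}_2 > 0$ (the boundary case $\overline{\Pi}_2 = 0$ occurs on a null event and is ignored).

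It then remains to split on the position of $g_{12}$ relative to $\tilde{g}_{12}$, using the equivalence $g_{12} < \tilde{g}_{12} \iff \Pi_2^{\text{spread}} > kP$ from \eqref{Eq:g12Tilde}. If $g_{12} < \tilde{g}_{12}$, then $\Pi_2^{\text{share}} > \Pi_2^{\text{spread}} > kP$ --- the first inequality holding because $g_{12} > 1/2 > g^*$ --- so both continuation payoffs are strictly positive and $\overline{\Pi}_2 > 0$ for every $\rho$, giving unconditional entry. If instead $g_{12} > \tilde{g}_{12}$, then $\Pi_2^{\text{spread}} < kP < \Pi_2^{\text{share}}$ (the upper bound by Assumption \ref{As:SignOfb}), so the numerator $\Pi_2^{\text{share}} - kP$ and the denominator $\Pi_2^{\text{share}} - \Pi_2^{\text{spread}}$ of $d$ are both positive, and solving $\overline{\Pi}_2 > 0$ for $\rho$ yields exactly the cutoff $\rho < d$ of \eqref{Eq:ProbCutoff}; thus the secondary enters when $\rho < d$ and exits when $\rho > d$.

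The arithmetic here is entirely routine --- most of it is the chain of equivalences already carried out in \eqref{Eq:g21Threshold}--\eqref{Eq:ProbCutoff} --- so I do not expect a genuine obstacle. The only point demanding care is the belief specification at the entry node: one must justify that ``the probability the primary spreads'' equals the prior $\rho$ rather than some posterior, which is immediate here because the entry move precedes any observation of the primary, and then confirm that the induced entry rule is sequentially rational given the anticipated SBGI continuation. Everything downstream of that observation is mechanical.
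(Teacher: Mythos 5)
Your proposal is correct and follows essentially the same route as the paper, whose ``proof'' of Proposition \ref{Thm:NE-SGI-E} is simply the derivation in the text preceding it: the post-entry continuation is pinned down by Proposition \ref{Thm:NE-SGI}, the entry payoff is $\rho\,\Pi_2^{\text{spread}} + (1-\rho)\,\Pi_2^{\text{share}} - kP$ with $\rho$ the prior probability of spreading, and the case split on $g_{12}$ versus $\tilde{g}_{12}$ via \eqref{Eq:g12Tilde} and \eqref{Eq:ProbCutoff} gives unconditional entry in one case and the $\rho < d$ cutoff in the other. Your added remarks (that the belief at the entry node is the prior, and that $\Pi_2^{\text{share}} > \Pi_2^{\text{spread}}$ because $g_{12} > 1/2 > g^*$) match the paper's own observations.
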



\section{Repeated Games with Entry: The Reputation Effect}
\label{sec:RepeatedGames}

In this section, we study {\em repeated} interactions between wireless
devices with incomplete information about their opponents.  We
consider a finite horizon repeated game, where in each period the
primary and secondary users play the SBGI-E game studied in the
previous section.  Consider, for example, a single secondary device
considering ``entering'' one of several distinct bands, each owned by
a different primary.  The secondary is likely to {\em poll} the
respective bands of the primaries, probing to see if entering the band
is likely to yield a high data rate.  Each time the secondary probes a
single primary user's band, it effectively decides whether to enter or
exit; we model each such stage as the SBGI-E game of the preceding
section.

For analytical simplicity, we assume that the secondary
user is {\em myopic}; i.e., it tries to maximize its single period
payoff.  The primary user acts to maximize its total
(undiscounted) payoff over the entire horizon. Even though the
secondary user is myopic, it has \textit{perfect} recall of the
actions taken by both the primary and the secondary user in previous
periods of the game.  We find that this model can be studied using
seminal results from the economic literature on {\em reputation
  effects}; in particular, our main insight is that the primary may
choose to spread power against an entering secondary, even if it is
not profitable in a single period to do so---the goal being to
``scare'' the secondary into never entering again.
We first describe our repeated game model in section
\ref{sec:RepeatedModel}. Then, in section \ref{sec:ReputationResults},
we analyze sequential equilibria for such games.

\subsection{A Repeated SBGI-E Game}
\label{sec:RepeatedModel}

We first assume that at the beginning of the repeated game, nature
chooses (independent) cross channel gains $g_{12}$ and $g_{21}$ from a
known common prior distribution $F$, and these channel gains stay
constant for the entire duration of the game.  As in the preceding
section, we assume $g_{11} = g_{22} = 1$, to isolate the effect of
interference.  We continue to make the assumption (for technical
simplicity) that the secondary user is not aware of $g_{21}$, while
the primary user knows the value of $g_{12}$; insights for the case
where the primary does not know $g_{12}$ are offered in the appendix
As before, the channel gains determine the
data rate obtained by each user.

We assume that the primary and the secondary user play the same SBGI-E game
in each period; i.e., each period the secondary decides whether to
enter or exit. Because the secondary player is assumed to be myopic,
once the secondary player enters the game its best response to the
primary user's action is uniquely defined by Proposition
\ref{Thm:NE-SGI}. In particular, post-entry, the best response of the
secondary is {\em identical} to the action taken by the
primary---regardless of the channel gains. Thus, we can reduce the
three-stage SBGI-E game into a two-stage SBGI-E game. In the first
stage of this reduced game, the secondary user chooses either $N$
(enter) or $X$ (exit). If the secondary user enters, then in the
second stage, the primary user chooses either $SH$ (share) or $SP$
(spread).  The payoffs of the players are then realized, using the
fact that the post-entry action of the secondary user is same as the
action of the primary user.

We let $(a_{1,t}, a_{2,t})$ denote the actions chosen by the two players at each time period $t$.
If the secondary exits, then the primary is always
strictly better off spreading power across subchannels instead of
concentrating in a single subchannel.  Thus $a_{2,t} = X$ is never
followed by $a_{1,t} = SH$ in equilibrium, so without loss of
generality we assume $(a_{1,t}, a_{2,t}) \in \{ (SP, X), (SH,
N), (SP, N) \}$.  


We assume both users have \textit{perfect} recall of the actions taken by the users in
previous periods. Let $h_{i,t}$ denote the {\em history} recalled by
player $i$ in period $t$. Then, $h_{1,t} = (\v{a}_{1}, \cdots
\v{a}_{t-1}, a_{2,t})$ and $h_{2,t} = (\v{a}_{1}, \cdots \v{a}_{t-1})$
(since the primary observes the entry decision of the secondary before
moving). The strategy $s_{i}(h_{i,t})$ of a user~$i$ is a
probability distribution over available actions ($\{X, N\}$ for the
secondary user, and $\{SH, SP\}$ for the primary user).  
Note in particular that in this section we allow {\em mixed
  strategies} for both players.

By a slight abuse of notation, let $\Pi_{i}(\v{a}_{t})$ denote the
payoff of player $i$ in period $t$. For the primary user, the payoff
in each period is the maximum data rate it gets in that period, and
its objective is to maximize the total payoff $\sum_{t=
1}^{T}\Pi_{1}(\v{a}_t)$. Here $T$ is the length of the horizon for the repeated game. In each period $t$, let $\PiZero$ be the
payoff obtained by the primary user if the secondary user exits; then $\PiZero = \Pi_1(SP, X) = \log\left(1 + \frac{P}{2N_{0}}\right).$
As before, we do not assume a cost of power for the primary
user; this does not affect the results presented in this
section. On the other hand, the secondary user is considered to be
myopic: its objective is to maximize its one period payoff. If the
secondary user decides to exit the game, it obtains zero rate with no cost
of power, so $\Pi_2(SP, X) = 0$.

The per-period payoffs of the primary player and the secondary player are thus given as:
\begin{equation}
\setlength{\nulldelimiterspace}{0pt}
\Pi_{1}(\v{a}_{t}) = \left\{\begin{array}{ll}
\Pi_{0},\ & \text{if}\ \v{a}_{t} = (SP, X) \\
\Pi^{\text{share}}_{1},\ & \text{if}\ \v{a}_{t} = (SH, N) \\
\Pi^{\text{spread}}_{1},\ &\text{if}\ \v{a}_{t} = (SP, N);
\end{array}\right.
\setlength{\nulldelimiterspace}{0pt}
\Pi_{2}(\v{a}_{t}) = \left\{\begin{array}{ll}
0,\ & \text{if}\ \v{a}_{t} = (SP, X) \\
\Pi^{\text{share}}_{2} - kP,\ & \text{if}\ \v{a}_{t} = (SH, N) \\
\Pi^{\text{spread}}_{2} - kP,\ & \text{if}\ \v{a}_{t} = (SP, N).
\end{array}\right.
\label{Eq:PayoffMatrixPrimary}
\end{equation}
Here $\Pi^{\text{share}}_{i}$ and $\Pi^{\text{spread}}_{i}$ are
defined in 
\eqref{Eq:JointRates2}. Since
$\Pi_i^{\text{spread}}$ may be stochastic, $\Pi_i(\v{a}_t)$ may be
stochastic as well.

\subsection{Sequential Equilibrium of the Repeated Game}
\label{sec:ReputationResults}

In this section we study sequential equilibria of the repeated SBGI-E
game.  Note that all exogenous parameters are known by the primary
under Assumption \ref{As:informationSets}.  However, the secondary
does not know the channel gain $g_{21}$ of the primary, and instead
maintains a conditional distribution, or {\em belief}, of the value of
$g_{21}$ given the observed history $h_{2,t}$.  The secondary user
updates his beliefs in a Bayesian manner as the history evolves (see
footnote \ref{foot:seqeq}).

As shown in Proposition \ref{Thm:NE-SGI-E}, if $g_{12} < \tilde{g}_{12}$
for the single period SBGI-E game, the secondary user's entry decision
depends only on its realized channel gain $g_{12}$ and the cost of
power.  It follows that regardless of the secondary's beliefs, 
if $g_{12} < \tilde{g}_{12}$ the secondary user
either enters in every period or it stays out in every period; i.e.,
its strategy is independent of history.  Thus, in the
sequential equilibrium of the repeated game, each period follows the
sequential equilibrium of the single period game (cf. Proposition
\ref{Thm:NE-SGI-E}).  

More interesting behavior arises if $g_{12} > \tilde{g}_{12}$.
In this case, from Proposition \ref{Thm:NE-SGI-E}, the secondary user
prefers to enter if there is high probability the subchannels will be
shared by the primary, and prefers to exit otherwise.  Since the
primary's action depends on its gain $g_{21}$, in this case we
must calculate the secondary user's conditional distribution of
$g_{21}$ after each history $h_{2,t}$.  Belief updating can lead to  significant
analytical complexity, as the belief is infinite-dimensional (a distribution over a
continuous space).

However, the result of Proposition \ref{Thm:NE-SGI-E} suggests perhaps
some reduction may be possible: as noted there, the secondary's action
only depends on its belief about whether $g_{21}$ is larger or smaller
than $g^*$, which can be reduced to a scalar probability.  If we can
exhibit a sequential equilibrium of the repeated game in which the
primary's action only depends on whether $g_{21}$ is larger or smaller
than $g^*$ as well, then we can represent the secondary's belief by a
scalar sufficient statistic, namely the probability that $g_{21}$ is
larger than $g^*$.

Remarkably, we show that precisely such a reduction is possible, by
exhibiting a sequential equilibrium with the desired property.  Define
$\mu_{2,t}(h_{2,t}) = \P(g_{21} < g^{*}_{21} | h_{2,t})$ .  We exhibit
a sequential equilibrium where (1) the entry decision of the secondary
user in period $t$ is based only on this probability; and (2) the
primary's strategy is entirely determined by whether $g_{21}$ is
larger or smaller than $g^*$.  In this equilibrium $\mu_{t}$ will be a
sufficient statistic for the history of the play until time $t$.

The equilibrium we exhibit has the property that the primary user can
exploit the lack of knowledge of the secondary user. To illustrate
this point, consider a simplistic 2 period game. Assume that $g_{21} >
g^{*}$, so that in a single period game the primary prefers $SH$ to
$SP$ after entry by the secondary.  If the secondary enters in the
first period and the primary plays $SP$ (spread), the secondary may
mistakenly believe $g_{21}$ to be {\em small}---and thus expect its
payoff to be negative in the second period as well, and hence not
enter.  The primary thus obtains total payoff
$\Pi^{\text{spread}}_{1} + \PiZero$.  By contrast, if
the primary had shared in the first period, the secondary would
certainly have entered in the second period as well, and in this case
the primary obtains payoff $2\Pi^{\text{share}}_1$.  
It can be easily shown that if $g_{21} < 1$, then 
$\Pi^{\text{spread}}_{1} + \PiZero > 2\Pi^{\text{share}}_{1}$. Thus
under such conditions, the primary user can benefit by spreading even
though its single-period payoff is maximized by sharing.

The above example highlights the fact that the secondary user's lack
of information can be exploited by the primary user for its own
advantage. The primary user can masquerade and build a
\textit{reputation} as an ``aggressive'' player, thereby preventing entry by
the secondary user. Such ``reputation effects"' were first studied in
the economics literature \cite{KW_ECONTH1982, MR_ECONTH1982}, where
the authors show that the lack of complete information can lead to
such effects.\footnote{The lack of information can also be used to sustain
desirable equilibria, as shown in the case of the finitely repeated
prisoner's dilemma \cite{KMRW_ECONTH1982}. For a comprehensive
treatment of such reputation effects see
\cite{Mailath_samuelson_2006}.}

For the repeated SGI-E game with $g_{12} > \tilde{g}_{12}$, a
sequential equilibrium can be derived using an analysis closely following
\cite{KW_ECONTH1982}. Here (for notational simplicity) the periods are numbered in reverse
numerical order. Thus, $T$ denotes the first chronological period, and
$1$ the last.  We have the following theorem. 
\begin{theorem}
Suppose Assumptions
\ref{As:informationSets} and \ref{As:SignOfb} hold, and that $g_{12} >
\tilde{g}_{12}$.  Also assume that $\P(g_{21} < 1) = 1$. Let
$d$ be defined as in \eqref{Eq:ProbCutoff}. Then the following actions
and the belief update rule form a sequential equilibrium of the
finite horizon repeated SBGI-E game:
\begin{enumerate}
\item The secondary user in period $t$ enters the system if 
  $\mu_{2,t}(h_{2,t}) < d^t$, and it exits the system if $\mu_{2,t}(h_{2,t}) > d^{t}$.
  If $\mu_{2,t} = d^{t}$, the secondary user enters with probability
  $\lambda$, and exits with probability $ 1- \lambda$, where:
\[
\lambda = 2 - \frac{\PiZero - \Pi^{\text{spread}}_{1}}{\PiZero - \Pi^{\text{share}}_{1}}.
\]

\item If $g_{21} < g^*$, the primary user always spreads its power.
 If $g_{21} > g^*$, then after entry by a secondary user in period
  $t > 1$, the primary user always spreads if $\mu_{t} \geq d^{t-1}$, and
  otherwise randomizes, with the probability of spreading equal to:
\begin{align}
\gamma = \frac{\mu_{t}}{(1-\mu_{t})}\frac{(1-d^{t-1})}{d^{t-1}}.
\end{align}
For $t=1$, a primary user with $g_{21} > g^*$ always shares.
\item The beliefs $\mu_{t}$ are updated as follows:
\begin{equation}
\setlength{\nulldelimiterspace}{0pt}
\mu_{2,t}(h_{2,t})=\left\{\begin{array}{ll}
\mu_{2,t+1}(h_{2,t+1}),\ & \text{if}\ a_{t+1} = (X, \phi);\\
\max\{d^{t}, \mu_{2,t+1}(h_{2,t+1})\},\ & \text{if}\ a_{t+1} = (N, SP)\
\text{and}\ \mu_{2,t+1}(h_{2,t+1}) > 0;\\
0,\ & \text{if}\ a_{t+1} = (N, SH)\ \text{or}\ \mu_{2,t+1}(h_{2,t+1}) = 0.%
\end{array}\right.
\end{equation}
\end{enumerate}
\label{Thm:ReputationResult}
\end{theorem}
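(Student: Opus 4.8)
The plan is to verify the two requirements of sequential equilibrium—sequential rationality and consistency of beliefs—by backward induction on the number of remaining periods $t$ (recall periods are numbered in reverse, so $t=1$ is the last). Following Proposition~\ref{Thm:NE-SGI-E}, I would first collapse the continuum of primary types into a binary classification: a ``weak'' type with $g_{21} < g^*$, who strictly prefers to spread in any single-period interaction, and a ``strong'' type with $g^* < g_{21} < 1$, who myopically prefers to share but may spread to build reputation. The secondary's payoff-relevant belief then reduces to the scalar $\mu_{2,t} = \P(g_{21} < g^{*}\mid h_{2,t})$. The base case $t=1$ is immediate: with no future, the strong type shares and the weak type spreads, so the stage is exactly the SBGI-E game, and the secondary's cutoff $d^1 = d$ together with the entry rule reproduces Proposition~\ref{Thm:NE-SGI-E}.

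For the inductive step I would pin down the two mixing probabilities by the standard indifference logic. The primary's spreading probability $\gamma$ is chosen so that Bayesian updating after an observed spread moves the belief to the next cutoff: with the weak type spreading for sure and the strong type spreading with probability $\gamma$, Bayes' rule gives posterior $\mu_t/(\mu_t + (1-\mu_t)\gamma)$, and setting this equal to $d^{t-1}$ yields exactly the stated $\gamma$. A short computation then shows the probability of a spread conditional on entry equals $\mu_t/d^{t-1}$, so the (myopic) secondary's expected stage payoff is positive precisely when $\mu_t < d\cdot d^{t-1} = d^t$; this establishes the secondary's cutoff and its sequential rationality. The secondary's randomization probability $\lambda$ at $\mu_t = d^t$ is instead set to make the \emph{strong} type indifferent between sharing and spreading.

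The crux is the strong type's sequential rationality, which I would handle with a value-function recursion. Let $V_t$ denote the strong type's continuation payoff on entering period $t$ with belief at the cutoff $d^t$. Sharing now reveals the type (belief jumps to $0$), after which the secondary enters and the strong type shares for all remaining periods, giving continuation value $t\,\Pi^{\text{share}}_1$; spreading instead yields $\Pi^{\text{spread}}_1$ today and raises the belief to $d^{t-1}$. Writing out $V_t$ along the equilibrium path (weight $\lambda$ on entry, worth $t\,\Pi^{\text{share}}_1$ by the inductive indifference; weight $1-\lambda$ on exit, worth $\PiZero$ followed by re-entry at the raised cutoff) I would obtain
\[
V_t(d^t) = (t-1)\,\Pi^{\text{share}}_1 + \lambda\,\Pi^{\text{share}}_1 + (1-\lambda)\,\PiZero,
\]
and equating this to $(t+1)\,\Pi^{\text{share}}_1 - \Pi^{\text{spread}}_1$—the value the strong type's indifference one period earlier demands—forces $\lambda = (\PiZero - 2\,\Pi^{\text{share}}_1 + \Pi^{\text{spread}}_1)/(\PiZero - \Pi^{\text{share}}_1)$, which is the stated value and, crucially, is independent of $t$, so a single $\lambda$ sustains indifference in every period. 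The assumption $\P(g_{21}<1)=1$ enters exactly here: it guarantees $\PiZero - 2\,\Pi^{\text{share}}_1 + \Pi^{\text{spread}}_1 \geq 0$, i.e.\ $\lambda \in [0,1]$, so the randomization is well defined.

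Finally I would check belief consistency and close the induction. On path, beliefs follow Bayes' rule (exit is uninformative so $\mu$ is unchanged, a spread raises $\mu$ to $\max\{d^t,\mu_{t+1}\}$, and sharing—played only by the strong type—drives $\mu$ to $0$); off-path beliefs arise as limits under vanishing trembles, which is routine. The weak type needs no indifference: spreading simultaneously maximizes its current payoff and its chance of deterring future entry, so it is a strict best response at every history. Invoking the one-shot deviation principle, these checks show no type of either player can profitably deviate. I expect the main obstacle to be the bookkeeping in the strong type's recursion—specifically confirming that the cutoff schedule $\{d^t\}$ is self-consistent across periods and that the $t$-dependence cancels so a single $\lambda$ works throughout.
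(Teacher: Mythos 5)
Your proposal is correct and follows essentially the same route as the paper's own argument, which is a backward-induction construction in the style of Kreps--Wilson: the same binary weak/strong type reduction, the same choice of $\gamma$ via Bayes' rule to push the posterior to the next cutoff $d^{t-1}$, the same derivation of the secondary's cutoff $d^t$ from the conditional spread probability $\mu_t/d^{t-1}$, the same indifference condition pinning down $\lambda$, and the same use of $\P(g_{21}<1)=1$ to guarantee $\Pi^{\text{spread}}_{1}+\PiZero > 2\Pi^{\text{share}}_{1}$ so that $\lambda\in[0,1]$. The only difference is one of scope: the paper writes out the $T=2$ case explicitly and defers the general horizon to the cited reference, whereas you sketch the general-$T$ value-function recursion directly (and your observation that the $t$-dependence cancels, so one $\lambda$ works in every period, is exactly what makes that extension go through).
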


The proof of the above theorem follows steps similar to those in \cite{KW_ECONTH1982}. An outline of the proof for $T = 2$ is given in the appendix. Note that $d < 1$, so $d^t$ increases as the game progresses (since $t
= T, \ldots, 1$).  The first period in which the secondary will enter
is when its initial belief $\rho = \P(g_{21} < g^*)$ first falls below
$d^{t}$; thus even if $g_{21} > g^*$, entry is deterred from $T$ up to (approximately) $t^{*} =
\frac{\log(\rho)}{\log(d)}$.   It is important to note that this {\em
  never} happens in a complete information game: if the secondary knew
$g_{21} > g^*$ it would enter in {\em every} time period, and the
primary would always share.

We conclude by noting that it is straightforward to show that
equilibria may be inefficient: for fixed $\rho$ and $d$, $t^*$ is
constant, so as $T$ increases the number of periods in which entry is
deterred increases without bound.  For parameter values where it would
have been better to allow both users to transmit in each period, the
resulting equilibrium is clearly inefficient.

\section{Conclusion}
\label{sec:conclusions}
We have studied distributed resource allocation in wireless systems via static and sequential Gaussian interference
games of incomplete information.  Our analysis shows that equilibria
of these Bayesian games exhibit significant differences from their
complete information counterparts.  In particular, we have shown in two settings that
static Gaussian interference games have a unique, potentially
inefficient equilibrium where all users spread their powers.  More
dramatically, in repeated sequential games, we have shown that the
lack of channel information can lead to reputation effects. Here the
primary user has an incentive to alter its power profile to keep
incoming secondary users from entering the system.

\appendix

\delrj{
\section{Proof of Theorem \ref{Thm:UniquenessBGIG}}

\textbf{Theorem \ref{Thm:UniquenessBGIG}}.  {\em For the UC-GI game
with $B = 2$ subchannels, there exists a unique symmetric pure
strategy Nash equilibrium, regardless of the channel distribution $F$,
where the users spread their power equally over the entire band; i.e,
the unique NE is $P_{11}^* = P_{12}^* = P_{21}^* = P_{22}^* = P/2$.}

\begin{proof}
Note that if $\v{P}^*$ is a NE, then (substituting the power constraint)
we conclude $P_{i1}^*$ is a solution of the following maximization problem:
%
\begin{align*}
\max_{P_{i1}}\int_G \left[\frac{1}{2} \log\left(1 +
  \frac{g_{ii}P_{i1}}{N_0 + g_{-i,i}P_{-i,1}}\right) + \frac{1}{2}
  \log\left(1 + \frac{g_{ii}(P - P_{i1})}{N_0 + g_{-i,i}(P -
    P_{-i,1})}\right)\right]\; f(\v{g})\; d \v{g}. 
\end{align*}
Since $\log(1 + x)$ is strictly concave in $x$, the first order
conditions are necessary and sufficient to identify a NE.  
Differentiating and simplifying yields:
\[ \int_G \frac{g_{ii}}{2}\left(\frac{g_{ii}(P-2P_{i1}) +
  g_{-i,i}(P-2P_{-i,1})}{(N_0 + g_{ii}P_{i1} + g_{-i,i}P_{-i,1})(N_0 +
g_{ii}(P - P_{i1} +  g_{-i,i}(P-P_{-i,1})))}\right)\; f(\v{g}) \;
  d\v{g} = 0. \]

Note that the denominator in the integral above is always
positive; and further, $g_{ii} > 0$ on $G$.  Thus in a NE, if $P_{i1}
> P/2$, then we must have $P_{-i,1} < P/2$ (and vice versa).  Thus the
only symmetric NE occur where $P_{i1} = P_{-i,1} = P/2$, as required.
\end{proof}

\textbf{Corollary \ref{Corr:UniquenessBCh}}.  {\em Consider the UC-GI
game with $B > 1$ subchannels.  There exists a 
unique symmetric NE, where the two users spread their power equally 
over all $B$ subchannels.}

\begin{proof}
The proof follows from an inductive argument; clearly the result holds
if $B = 2$.  Let $\v{P}^B$ be a
symmetric NE with $B$ subchannels.  Let $S \subset \{ 1, \ldots, B \}$
be a subset of the subchannels.  Since the NE is symmetric, let
$Q^S = \sum_{c \in S} P_{ic}^B$; this is the total power the
players use in the  subchannels of $S$.  It is clear that if we
{\em restrict} the power vector $\v{P}^B$ to only the 
subchannels in $S$, then the resulting power vector must be a
symmetric NE for the UC-GI game over only these 
subchannels, with total power constraint $Q^{S}$.  Since this holds
for every subset $S \subset \{ 1, \ldots, B \}$ of size $|S| \leq
B-1$, we can apply the inductive hypothesis to conclude every user
allocates equal power to each subchannel in the equilibrium $\v{P}^B$, as required.
\end{proof}

} 

\section{Sequential Interference Games with Two-Sided Uncertainty}
\label{sec:appendix}

The model for the two stage sequential Bayesian Gaussian interference
(SBGI) game described in Section \ref{sec:2StageGame} assumed that the
primary user is aware of the channel gain $g_{12}$. In this appendix
we analyze the case where both users are aware of only their own
incident channel gains. We refer to this case as \textit{two-sided
uncertainty}.  In Section \ref{sec:SBGItwosided}, we analyze a two
stage sequential game analogous to Section \ref{sec:2StageGame}.  In Section
\ref{sec:SBGIEtwosided}, we extend this analysis to include an entry
stage as well.

\subsection{A Two Stage Sequential Game}
\label{sec:SBGItwosided}

As before, we solve for the equilibrium path using backward
induction.  Whenever the primary user chooses $SP$, the best response of the
secondary user is to choose $SP$ regardless of the value of $g_{12}$;
the reasoning is identical to the proof of Lemma \ref{Lemma:SBGI}.
Therefore, if the
primary user chooses the action $SP$, its payoff is
$\PiPrimary^{\text{spread}}$ which is given by 
\begin{align}
\PiPrimary^{\text{spread}}(g_{21}) = \log\left(1 + \frac{P/2}{N_{0} +
g_{21}P/2}\right). 
\label{Eq:PrimaryPayoffSP}
\end{align}
(Note that we now explicitly emphasize the dependence of the payoff
on the channel gain $g_{21}$.)

If the primary user decides to share the bandwidth, i.e., chooses the
action $SH$, then depending upon the value of $g_{12}$ the secondary
user will choose between share ($SH$) or spread ($SP$). From Lemma
\ref{Lemma:SBGI} we know that if $g_{12} < 1/2$, the secondary user
would prefer to spread its power even though the primary user shares
the subchannels. Let $\kappa  = \P(g_{12} <
1/2)$. Since the primary user is unaware of $g_{12}$, its expected
payoff (denoted by $\overline{\Pi}_{1}^{\text{share}}$) if it shares
the bandwidth is given by
\begin{align}
\overline{\Pi}_{1}^{\text{share}}(g_{21}) = \frac{1 - \kappa}{2} \log\left( 1 + \frac{P}{N_{0}}\right) + \frac{\kappa}{2}\log\left(1 + \frac{P}{N_{0} + g_{21}P/2}\right)
\label{Eq:PrimaryExpPayoffSH}
\end{align}

We have the following proposition.

\begin{proposition}
In the first stage of the SBGI game with two-sided uncertainty, there
exists a threshold $\hat{g}_{21}(\kappa) > 0$ (possibly infinite) such
that if $g_{21}(\kappa) < \hat{g}_{21}$, the primary user always
spreads its power, and if $g_{21} > \hat{g}_{21}(\kappa)$, the primary
user always shares the subchannels.
\label{Prop:SBGI-2Sided}
\end{proposition}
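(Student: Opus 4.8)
The plan is to reduce the primary's first-stage choice to a single scalar comparison and then exploit monotonicity in $g_{21}$. The secondary's best response is already pinned down by the reasoning preceding the proposition (spread against $SP$ always; against $SH$, spread iff $g_{12} < 1/2$), so a primary with gain $g_{21}$ plays $SP$ exactly when $\PiPrimary^{\text{spread}}(g_{21}) > \overline{\Pi}_{1}^{\text{share}}(g_{21})$ and $SH$ otherwise, using the closed forms \eqref{Eq:PrimaryPayoffSP} and \eqref{Eq:PrimaryExpPayoffSH}. Accordingly I would define the difference
\[
D(g_{21}) \triangleq \PiPrimary^{\text{spread}}(g_{21}) - \overline{\Pi}_{1}^{\text{share}}(g_{21}),
\]
so that the primary spreads exactly when $D(g_{21}) > 0$. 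The proposition then amounts to showing that $D$ changes sign at most once, from positive to negative, as $g_{21}$ increases.

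First I would record the boundary behavior. At $g_{21} = 0$ the two share-outcome terms collapse and $D(0) = \log(1 + \tfrac{P/2}{N_{0}}) - \tfrac{1}{2}\log(1 + \tfrac{P}{N_{0}})$, independent of $\kappa$; this is strictly positive because strict concavity of $\log(1+x)$ gives $(1 + \tfrac{P}{2N_{0}})^{2} > 1 + \tfrac{P}{N_{0}}$. Hence the primary spreads for small $g_{21}$, which forces any threshold to be strictly positive. As $g_{21} \to \infty$, both $\PiPrimary^{\text{spread}}(g_{21})$ and the interference-bearing term of $\overline{\Pi}_{1}^{\text{share}}$ vanish, leaving $\lim D = -\tfrac{1-\kappa}{2}\log(1 + \tfrac{P}{N_{0}}) \le 0$, with equality iff $\kappa = 1$.

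The crux is to show $D$ is strictly decreasing on $[0,\infty)$. Writing $a = N_{0} + g_{21}P/2$, the entire $g_{21}$-dependence of $D$ lives in $\log(1 + \tfrac{P}{2a}) - \tfrac{\kappa}{2}\log(1 + \tfrac{P}{a})$, since the term $\tfrac{1-\kappa}{2}\log(1 + P/N_{0})$ is constant. Differentiating in $a$ and using $da/dg_{21} = P/2 > 0$ reduces the sign of $D'$ to that of $\tfrac{\kappa}{a+P} - \tfrac{1}{a+P/2}$. Because $\kappa \le 1$, we have $\tfrac{\kappa}{a+P} \le \tfrac{1}{a+P} < \tfrac{1}{a+P/2}$, so this quantity is strictly negative and thus $D'(g_{21}) < 0$ for every $g_{21} \ge 0$ and every $\kappa \in [0,1]$. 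I expect this derivative-sign check to be essentially the only work: its appeal is that the constant term drops out and both surviving logarithms share the denominator $a$, so the bound is immediate rather than requiring a delicate concavity estimate.

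Finally I would assemble the pieces by the intermediate value theorem. If $\kappa < 1$, then $D$ is continuous, strictly decreasing, positive at $0$, and strictly negative in the limit, so it has a unique zero $\hat{g}_{21}(\kappa) > 0$ with $D > 0$ below it and $D < 0$ above it---exactly the stated spread/share dichotomy. If $\kappa = 1$, strict monotonicity together with $\lim D = 0$ forces $D(g_{21}) > 0$ for all finite $g_{21}$, so the primary always spreads and we set $\hat{g}_{21}(\kappa) = \infty$, matching the ``possibly infinite'' clause. The one subtlety to flag is precisely this degenerate case $\kappa = 1$, where the threshold escapes to infinity and must be handled separately from the generic finite-threshold argument.
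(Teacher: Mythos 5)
Your proof is correct and follows essentially the same route as the paper's: both define the payoff difference $\PiPrimary^{\text{spread}}(g_{21}) - \overline{\Pi}_{1}^{\text{share}}(g_{21})$, show it is strictly decreasing in $g_{21}$ via a sign check on the derivative (your substitution $a = N_{0} + g_{21}P/2$ is just the reciprocal of the paper's $y = P/(N_{0}+g_{21}P/2)$), evaluate the same limits at $0$ and $\infty$, and conclude by monotonicity, with the $\kappa = 1$ case yielding an infinite threshold. No substantive differences.
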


\begin{proof}
To decide whether to spread or share, the primary user needs to
compare $\PiPrimary^{\text{spread}}(g_{21})$ to
$\overline{\Pi}_{1}^{\text{share}}(g_{21})$.  We begin by establishing
that $\Delta(g_{21}) = \PiPrimary^{\text{spread}}(g_{21}) -
\overline{\Pi}_{1}^{\text{share}}(g_{21})$ is strictly decreasing in
$g_{21}$.  To see this, note that if we define $y(g_{21}) = P/(N_0 + g_{21}
P/2)$, then
\[ \Delta'(g_{21}) =  y'(g_{21}) \left( \frac{1/2}{1 + y(g_{21})/2} - \frac{\kappa/2}{1 +
y(g_{21})}\right). \] 
Since $y'(g_{21}) < 0$, and $0 \leq \kappa \leq 1$, we conclude that
$\Delta'(g_{21}) < 0$ for all $g_{21}$; i.e., $\Delta(g_{21})$ is
strictly decreasing in $g_{21}$, as required.

When $g_{21} = 0$, we have:
\begin{align*}
\overline{\Pi}_{1}^{\text{share}}(0) & = \frac{1}{2}\log\left(1 +
\frac{P}{N_0}\right);\\ 
\PiPrimary^{\text{spread}}(0) &= \log\left( 1+ \frac{P}{2N_0}\right).
\end{align*}
Since $\log(1+x)$ is a strictly concave function of $x$,
$\PiPrimary^{\text{spread}}(0) >
\overline{\Pi}_{1}^{\text{share}}(0)$. 
However, if $g_{21}$ is large, the payoffs to the primary
user in the two cases are given by:
\begin{align*}
\lim_{g_{21} \rightarrow \infty} \overline{\Pi}_{1}^{\text{share}} & = \frac{1-\kappa}{2}\log\left(1 + \frac{P}{N_0}\right);\\
\lim_{g_{21} \rightarrow \infty} \PiPrimary^{\text{spread}} &= 0.
\end{align*}
Thus, $\lim_{g_{21} \to \infty} \overline{\Pi}_{1}^{\text{share}} \geq
\lim_{g_{21} \to \infty} \PiPrimary^{\text{spread}}$.  If $\kappa =
1$, then $\Delta(g_{21}) > 0$ for all $g_{21}$, and thus user 1 always
spreads; i.e., $\hat{g}_{21}(\kappa) = \infty$.  Otherwise, there
exists a unique finite threshold $\hat{g}_{21}(\kappa) > 0$ determined
by the equation $\Delta(g_{21}) = 0$, as required.
\end{proof}


For the secondary user, the best response is to spread the power if
the primary user spreads its power. However, if the primary user
decides to share the subchannels, the secondary user will share if
$g_{12} > 1/2$; otherwise it spreads its power. This completely determines
the sequential equilibrium for the SBGI game with two-sided
uncertainty.

\subsection{A Sequential Game with Entry}
\label{sec:SBGIEtwosided}

We now consider the sequential Bayesian Gaussian interference with
entry (SBGI-E) game when both users are only aware of their own
incident channel gains; i.e., player $i$ only knows $g_{-i,i}$.  As
before, the secondary user first decides whether to enter or not; if
the secondary user exits, then the primary uses the entire band
without competition.  If the secondary enters, then play proceeds
as in the SBGI game of the preceding section.  Further, if the secondary
chooses to enter, it incurs a cost of power denoted by $kP$.

To describe a sequential equilibrium for the SBGI-E game with
two-sided uncertainty, we define $\PiSecondary^{\text{(share, a)}}$ to
be the rate of the secondary user when the primary user shares the
subchannels and the secondary user chooses the action $a \in \{SH,
SP\}$. Thus, we have:
\begin{align}
\PiSecondary^{\text{(share, share)}} &= \PiSecondary^{\text{share}} = \frac{1}{2}\log\left(1 + \frac{P}{N_0}\right),\\
\PiSecondary^{\text{(share, spread)}} &= \frac{1}{2}\log\left(1 + \frac{P}{2N_0}\right) + \frac{1}{2}\log\left( 1+ \frac{P/2}{N_0 + g_{12}P}\right).
\label{Eq:PayoffSecTwoSided}
\end{align}
We also recall the rate to the secondary user when both users spread
their powers, denoted $\PiSecondary^{\text{spread}}$
(cf. \eqref{Eq:JointRates2}):
\[ \PiSecondary^{\text{spread}} = \log \left( 1 + \frac{P/2}{N_0 +
g_{12} P/2} \right). \]

In this game, we must be particularly careful about how uncertainty
affects sequential decisions.  In particular, since the entry decision
of the secondary user will depend on the gain $g_{12}$, the primary
user {\em learns} about the value of $g_{12}$ from the initial action
of the secondary; this is modeled through the updated {\em belief} of
the primary user, i.e., its conditional distribution of $g_{12}$ given
the initial action of the secondary.  Post-entry, the play proceeds as in
the SBGI game with two-sided uncertainty considered in the previous
section.

The following proposition formally describes sequential equilibria for the
SBGI-E game with two-sided uncertainty; for simplicity, we assume the
gain distributions have full support on $(0,\infty)$, but this is unnecessary.
\begin{proposition}
\label{Thm:SeqEqTwoSided}
Assume that the channel gains $g_{12}$ and $g_{21}$ have strictly
positive densities on $(0,\infty)$.  For the sequential Bayesian
Gaussian interference game with entry (SBGI-E), any sequential
equilibrium consists of {\em threshold strategies} for both the
primary and secondary user; i.e., there exists a threshold
$\hat{g}_{12}$ such that the secondary user enters if $g_{12} <
\hat{g}_{12}$, and exits if $g_{12} > \hat{g}_{12}$; and post-entry,
there exists a $\hat{\kappa}$ such that the primary user spreads if
$g_{21} < \hat{g}_{21}(\hat{\kappa})$, and shares if $g_{21} >
\hat{g}_{21}(\hat{\kappa})$ (where $\hat{g}_{21}(\cdot)$ is the
threshold of Proposition \ref{Prop:SBGI-2Sided}).  In the third stage, if the primary
user spreads its power, the secondary user also spreads its power
regardless of the value of $g_{12}$. However, if the primary user
shares the subchannels, the secondary user spreads its power if
$g_{12} < 1/2$ and it shares otherwise.

The belief $\hat{\kappa}$ is computed using Bayes' rule:
\begin{align}
\label{Eq:BeliefUpdatePrimary}
\hat{\kappa} = \frac{\kappa\P(N\ |\ g_{12} < 1/2)}{\kappa\P(N\ |\ g_{12} < 1/2) + (1-\kappa)\P(N\ |\ g_{12} > 1/2)},
\end{align}
where $N$ denotes the ``entry'' action of the secondary user, and
$\kappa = P(g_{12} < 1/2)$ (the initial belief of the primary).
\end{proposition}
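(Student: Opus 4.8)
The plan is to solve the three-stage game by backward induction, forcing each stage into a threshold rule and then closing the loop with the belief-consistency condition. The third (post-entry response) stage is already pinned down: by Lemma \ref{Lemma:SBGI}, if the primary spreads then the secondary's unique best response is to spread regardless of $g_{12}$, while if the primary shares the secondary spreads exactly when $g_{12} < 1/2$ and shares when $g_{12} > 1/2$. This is the claimed third-stage threshold at $1/2$. For the second (share/spread) stage I would invoke Proposition \ref{Prop:SBGI-2Sided}, but with the primary's \emph{posterior} belief $\hat{\kappa} = \P(g_{12} < 1/2 \mid N)$ in place of the prior $\kappa$: the monotonicity argument there (that $\Delta(g_{21}) = \PiPrimary^{\text{spread}}(g_{21}) - \overline{\Pi}_{1}^{\text{share}}(g_{21})$ is strictly decreasing in $g_{21}$) is unaffected when $\kappa$ is replaced by $\hat{\kappa}$, so there is a threshold $\hat{g}_{21}(\hat{\kappa})$ with the primary spreading below it and sharing above it.

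The heart of the argument is the first (entry) stage. Fixing the primary's threshold, write $\hat{\rho} = \P(g_{21} < \hat{g}_{21}(\hat{\kappa}))$ for the probability that the primary spreads. Conditioned on entry, the secondary's expected payoff as a function of its own gain is
\begin{align*}
\overline{\Pi}_{2}(g_{12}) = \hat{\rho}\,\PiSecondary^{\text{spread}}(g_{12}) + (1-\hat{\rho})\,\PiSecondary^{\text{resp}}(g_{12}) - kP,
\end{align*}
where $\PiSecondary^{\text{resp}}(g_{12})$ equals $\PiSecondary^{\text{(share, spread)}}(g_{12})$ for $g_{12} < 1/2$ and $\PiSecondary^{\text{share}}$ for $g_{12} > 1/2$. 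I would show $\overline{\Pi}_{2}$ is non-increasing in $g_{12}$: the spread term is strictly decreasing; the response term is strictly decreasing on $(0,1/2)$, constant on $(1/2,\infty)$, and continuous at $1/2$ by the very definition of the threshold in \eqref{Eq:g12Threshold}. Strictness comes from the spread term whenever $\hat{\rho}>0$ and from the response term otherwise, so in every case the entry set $\{\overline{\Pi}_{2}(g_{12})>0\}$ is a down-interval of the form $g_{12} < \hat{g}_{12}$ (with $\hat{g}_{12}$ possibly infinite). This establishes the secondary's threshold structure.

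Finally I would close the loop with belief consistency. Under Assumption \ref{As:SignOfb} the secondary enters on a positive-measure set of gains (its payoff is strictly positive for $g_{12}$ near $0$ regardless of $\hat{\rho}$), so the primary's information node lies on the equilibrium path and its posterior must be the Bayesian update of $\kappa$ under the secondary's threshold rule. Writing $\P(N \mid g_{12} < 1/2) = \P(g_{12} < \hat{g}_{12} \mid g_{12} < 1/2)$ and similarly for $g_{12} > 1/2$, Bayes' rule yields exactly \eqref{Eq:BeliefUpdatePrimary}. The main obstacle is that this is genuinely a \emph{fixed-point} requirement: $\hat{\kappa}$ determines $\hat{g}_{21}(\hat{\kappa})$ and hence $\hat{\rho}$, which determines $\hat{g}_{12}$, which in turn feeds back into the posterior through \eqref{Eq:BeliefUpdatePrimary}, so an equilibrium is a mutually consistent triple $(\hat{\kappa}, \hat{g}_{21}, \hat{g}_{12})$. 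Since the proposition only asserts that \emph{any} sequential equilibrium has this threshold form, the substantive work reduces to the two monotonicity facts (the primary's, borrowed from Proposition \ref{Prop:SBGI-2Sided}, and the secondary's entry-payoff monotonicity above); the consistency equation then pins down the belief and is the sole point at which the coupling between the two players' strategies appears.
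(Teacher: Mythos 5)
Your proposal is correct and follows essentially the same route as the paper's proof: backward induction, the primary's post-entry threshold obtained from Proposition \ref{Prop:SBGI-2Sided} evaluated at the posterior $\hat{\kappa}$, monotonicity of the secondary's expected entry payoff in $g_{12}$ (the paper's function $h(g_{12},\alpha)$) to get the entry threshold, and Bayes' rule for \eqref{Eq:BeliefUpdatePrimary}. Your explicit remarks on the fixed-point coupling of $(\hat{\kappa},\hat{g}_{21},\hat{g}_{12})$ and on why the entry node is on-path are welcome refinements but do not change the argument.
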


\begin{proof}
We first show that in equilibrium, the primary must have a threshold
strategy of the form specified.  Note that if the secondary player
exits in the first stage, the primary user has no action to take and
the game ends. If the secondary player enters, the primary user
updates its belief about $g_{12}$ via Bayes' rule as given in
\eqref{Eq:BeliefUpdatePrimary}, given the entry strategy of the
secondary. Here the probabilities are with respect to the uncertainty
in $g_{12}$.

Now suppose that in equilibrium, the post-entry belief of the primary
user is fixed as $\hat{\kappa}$.  We consider the entry decision of the
secondary user. The secondary user's decision between entry or exit
depends on the post-entry action taken by the primary user. From
Proposition \ref{Prop:SBGI-2Sided}, we know that post-entry, the
primary user will spread (resp., share) if $g_{21}$ is less than
(resp., greater than) the threshold $\hat{g_{21}}(\hat{\kappa})$.  (Here
the threshold $\hat{g_{21}}$ depends on the post-entry belief
$\hat{\kappa}$ of the primary user.)  Thus, the decision taken by the
secondary user in the first stage depends on its initial belief
$\alpha = \P(g_{21} < \hat{g_{21}}(\hat{\kappa}))$.  Since
$\hat{g_{21}}(\hat{\kappa}) > 0$ from Proposition \ref{Prop:SBGI-2Sided}, it follows
that $\alpha > 0$.

After entry, with probability $\alpha$, the primary user
spreads its power; since the best response of the secondary user to
the spreading action by the primary user is to also spread, the payoff
of the secondary user in this case is $\PiSecondary^{\text{spread}}$. With
probability $1 - \alpha$, the primary user shares
the subchannels. Conditioned on the sharing action by the primary
user, the secondary user will spread its power if $g_{12} < 1/2$ and
will share otherwise. Thus, the expected payoff of the secondary user
upon entry is given by the function $h(g_{12}, \alpha)$, defined
as follows:
\begin{equation}
\setlength{\nulldelimiterspace}{0pt}
h(g_{12}, \alpha) = \left\{ \begin{array}{ll}
\alpha\PiSecondary^{\text{spread}} +
(1-\alpha)\PiSecondary^{\text{(share, spread)}},\ & \text{if}\ g_{12} \leq 1/2,
\\ 
\alpha\PiSecondary^{\text{spread}} +
(1-\alpha)\PiSecondary^{\text{(share, share)}},\ \text{if}\ g_{12} > 1/2. 
\end{array}\right.
\label{Eq:SecExpectedPayoff}
\end{equation}
It is easy to check that $h(g_{12}, \alpha)$ is continuous; further,
for fixed $\alpha > 0$, $h(g_{12}, \alpha)$ is a strictly decreasing
function of $g_{12}$.  The secondary user will enter if its expected
payoff $h(g_{12}, \hat{\alpha})$ is greater than its cost of power $kP$. 
If, $h(0, \alpha) < kP$, let $\hat{g}_{12} = 0$.   Similarly, if $\lim_{g_{12}
\to \infty} h(g_{12}, \alpha) > kP$, we define $\hat{g}_{12}(\alpha) =
\infty$.   Otherwise there exists a unique value of
$\hat{g}_{12} \in (0, \infty)$ with $h(\hat{g}_{12}, \alpha) = kP$.  Thus,
the secondary user will enter if $g_{12} < \hat{g}_{12}$, and exit if
$g_{12} > \hat{g}_{12}$.  This concludes the proof.
\end{proof}

\section{Sequential Equilibrium for a Two Period Repeated Game}
In this appendix, for completeness we give an outline of the proof of Theorem
\ref{Thm:ReputationResult} for the two period repeated SBGI-E game
with \textit{single-sided uncertainty}. The arguments given below are
based on those given in \cite{KW_ECONTH1982}, and we refer the reader
to that paper for details and extensions.

To analyze the two period repeated SBGI-E game, we use backward
induction. We number the periods in reverse numerical order. Thus,
period $1$ is the last period of the game, and period $T$ is the first
period of the game; period $t$ follows period $t+1$. To specify the
equilibrium, we need to specify the actions of the secondary user and
the primary user in all periods and after all possible values of
histories and beliefs. Note that if $g_{21} < g^{*}$, the primary user
always spreads its power. So we need to specify the action of the
primary user for the case when $g_{21} > g^{*}$; we refer to this type
of primary user as a {\em high-gain primary user}. For the remainder
of this discussion, we only specify the actions of the high-gain
primary user. (We ignore the $g_{21} = g^{*}$ case since channel gains
have continuous densities). Also note that although the beliefs of the
secondary user are history dependent, we suppress the history
dependence of the beliefs for notational simplicity.

\begin{itemize}

\item \textbf{Period 1}: In period $1$ (which is the last period), if
the secondary user does not enter, the high-gain primary user has no action to
take. However, if the secondary user decides to enter, the high-gain primary
user will share the subchannels, since sharing the
subchannels is the best response of the high-gain primary user to an entry by
the secondary user. To decide between entry and exit, the secondary
user takes into account its belief $\mu_{2,1}$ about the channel gain
$g_{21}$ of the primary user. The secondary user will enter if its
expected payoff in period $1$ is greater than its payoff if it
exits. This happens if 
\begin{align}
\mu_{2,1}(\PiSecondary^{\text{spread}} - kP) + (1 - \mu_{2,1})(\PiSecondary^{\text{share}} - kP) > 0 \quad \implies \ \mu_{2,1} < \frac{\PiSecondary^{\text{share}} -kP}{\PiSecondary^{\text{share}} - \PiSecondary^{\text{spread}}} = d.
\end{align}
Here $d$ is defined as in \eqref{Eq:ProbCutoff}. Thus, in equilibrium
the secondary user enters (N) if its current belief $\mu_{2,1} < d$,
it exits the system if $\mu_{2,1} > d$, and it is indifferent if
$\mu_{2,1} = d$.

To find the current belief $\mu_{2,1}$, the secondary user observes
the history of the play. If in period~$2$, the secondary user exits
the game $(X)$, no new information about the primary user is
learned. Thus if $h = (X)$, we have $\mu_{2,1} = \mu_{2,2}$. Since at
period $2$, no history has been observed, we have $\mu_{2,2} = \rho$,
which is the initial belief of the secondary user. However, if the
secondary user in period $2$ enters, the belief of the secondary user
in period $1$ would depend upon the action taken by the primary user
in period $2$. If the primary user shares the subchannels in period
$1$, then it is certain that $g_{21} > g^{*}$ and hence $\mu_{2,1} =
0$.

When the history $(N, SP)$ is observed, the secondary user uses Bayes'
rule to update its belief. Let $\gamma$ denote the probability that
the primary user would spread its power even if $g_{21} > g^{*}$. Then
the total probability that the primary user would spread (in period 2)
is $\mu_{2,2} + (1 - \mu_{2,2})\gamma$. Bayes' rule then
implies that the belief in period $1$ is given as
\begin{align}
\mu_{2,1} = \frac{\mu_{2,2}}{\mu_{2,2} + (1 - \mu_{2,2})\gamma}.
\label{Eq:BeliefUpdate}
\end{align}
Here the numerator is the probability that $g_{21} < g^{*}$ in period $2$.

\item \textbf{Period 2}: For the high-gain primary user in period $2$, the
action it takes in this period determines the history for period $1$,
and hence the action taken by the secondary user. The high-gain primary user
thus needs to conjecture the behavior of the secondary user in period
$1$ to decide its action. Note that the belief of the secondary user
in period $2$ is same as the initial belief, i.e., $\mu_{2,2} = \rho$.

If the secondary user does not enter the game at this period, the
primary user has no action to take. However, if the secondary user
enters, the high-gain primary user has to choose between the actions $SH$ or
$SP$. It chooses this action so as to maximize its expected total
payoff in the two periods. Here the expectation is over the randomness
in the action taken by the secondary user in period $1$\footnote{We
are assuming that secondary user $2$ has already entered, so the only
unknown factor is the action of the secondary user in period 1.}.

First note that in equilibrium $\gamma > 0$.  To see this, let us
assume otherwise, i.e., $\gamma = 0$. This implies that in
equilibrium, if the secondary user enters, the high-gain primary user does not
spread. Then the high-gain primary user's total payoff in $2$ periods is
$2\PiPrimary^{\text{share}}$. However, if the high-gain primary user spreads in
period $2$, then the secondary user in period $1$ has $\mu_{2,1} = 1$
(see \eqref{Eq:BeliefUpdate}) and hence it does not enter. In this
case, the total payoff to the high-gain primary user is
$\PiPrimary^{\text{spread}} + \PiZero$ which is greater than
$2\PiPrimary^{\text{share}}$ since $g_{21} < 1$. Hence the high-gain primary
user has a profitable deviation in equilibrium with $\gamma =
0$. Thus, in equilibrium $\gamma > 0$.

We consider two different cases. First suppose that $\mu_{2,2}
= \rho \geq d$. In this case, regardless of the strategy of the
primary user in period $2$, we have
\begin{align*}
\mu_{2,1} > \rho \geq d.
\end{align*}
Hence, the secondary user in period $1$ would not enter after seeing
the history of $SP$. So if the high-gain primary user in period $2$ takes the
action $SP$, the total payoff is $\PiPrimary^{\text{spread}} +
\PiZero$. On the other hand taking the action $SH$ would cause the
secondary user in period $1$ to enter, and hence the total payoff
would be $2\PiPrimary^{\text{share}}$. Since
$\PiPrimary^{\text{spread}} + \PiZero >2\PiPrimary^{\text{share}}$,
the best response for the high-gain primary user in period $2$ (if the secondary
enters and $\rho > d$) is to spread the power.

The second case is when $\mu_{2,2} = \rho < d$. In this case, we first
note that $\gamma < 1$. If we assume that $\gamma = 1$, then
$\mu_{2,1} = \mu_{2,2} < d$ and hence the secondary user in period $1$
would always enter and the high-gain primary would spread (since
$\gamma = 1$). But we know that in period $1$, the best response of
the high-gain primary user to an entry is to share. Hence $\gamma <
1$. This implies that if $\mu_{2,2} < d$, then $0 < \gamma < 1$. Thus
the high-gain primary user randomizes its policy over $SP$ and $SH$. This is
only possible if the secondary user in period $1$ also randomizes over
entry and exit. Let us denote the probability of secondary user
entering in period $1$ under these conditions as $\lambda$. Since the
high-gain primary user in period $2$ is indifferent between spreading and
sharing its expected payoff in both cases is the same. This gives
\begin{align*}
\PiPrimary^{\text{spread}} &+ \lambda\left(\PiPrimary^{\text{share}}\right) + (1-\lambda)\PiZero = 2\PiPrimary^{\text{share}} \\
\implies &\lambda = 2 - \frac{\PiZero -\PiPrimary^{\text{spread}}}{\PiZero - \PiPrimary^{\text{share}}}.
\end{align*}
Also, since the secondary user in period $1$ is indifferent between
entry and exit, its belief is $\mu_{2,1} = d$. This gives  
\begin{align*}
\frac{\mu_{2,2}}{\mu_{2,2} + (1-\mu_{2,2})\gamma} = d \quad \implies \quad \gamma = \frac{\mu_{2,2}}{(1-\mu_{2,2})}\frac{1-d}{d}.
\end{align*}

To determine the action of secondary user at period $2$, we note that
if it exits its payoff is $0$. Now if $\mu_{2,2} = \rho \geq d$, the
high-gain primary user spreads with probability $1$, hence the best response for
the secondary user is to exit. However, if $\mu_{2,2} = \rho < d$,
then the secondary user's expected payoff is
\begin{align*}
(\PiSecondary^{\text{spread}})(\rho + (1-\rho)\gamma) + (1-\rho)(1-\gamma)\PiSecondary^{\text{share}}.
\end{align*}
If the above expected payoff is less than $0$, the secondary user does
not enter. Using the value of $\gamma$ from above we get that if $\rho
> d^{2}$, the secondary user exits. However, if $\rho < d^{2}$, the
secondary user enters, and at equality the secondary user is
indifferent. This completely specifies the sequential equilibrium for
the two period repeated SBGI-E game.
\end{itemize}

The extension to an arbitrary finite horizon repeated game is similar
to the arguments given above and we refer the reader to
\cite{KW_ECONTH1982} for the detailed proof.

\renewcommand{\baselinestretch}{1}
\bibliographystyle{abbrv}
\bibliography{dsm}

\newpage
\begin{figure}[b]
\centering
\psfrag{User 1}[][][0.75]{User 1}
\psfrag{User 2}[][][0.75]{User 2}
\psfrag{h11}{$h_{11}$}
\psfrag{h21}{$h_{21}$}
\psfrag{h12}{$h_{12}$}
\psfrag{h22}{$h_{22}$}
\includegraphics[width=2in]{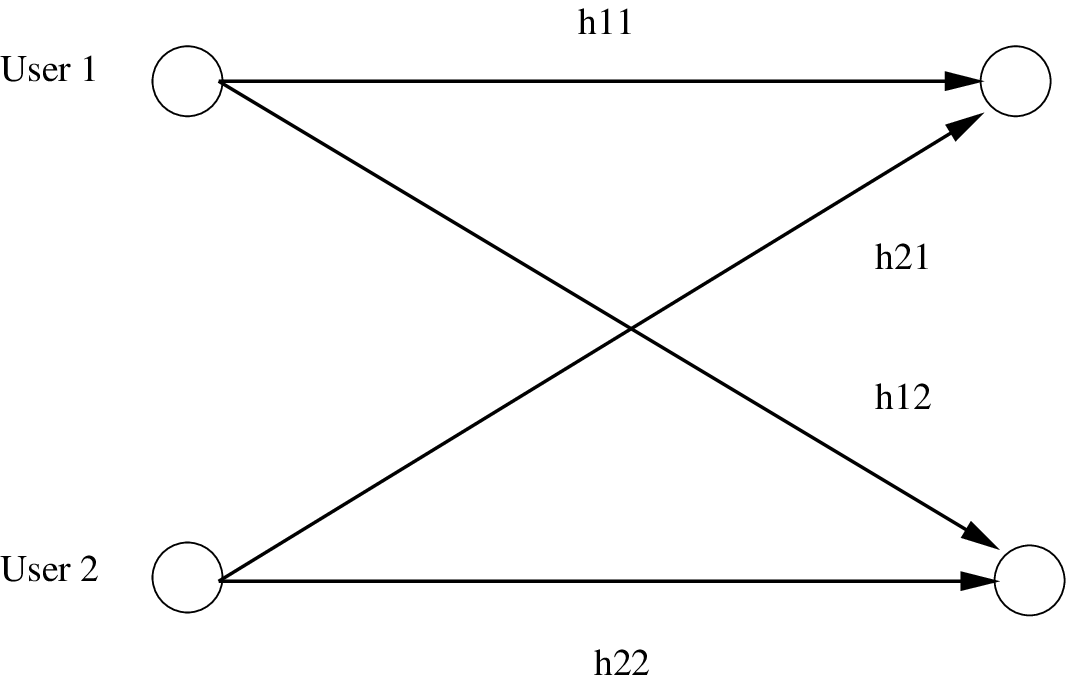}
\caption{Two User Interference Channel}
\label{Fig:InterferenceCh}
\end{figure}


\begin{figure}
\centering
\includegraphics[width=2.5in]{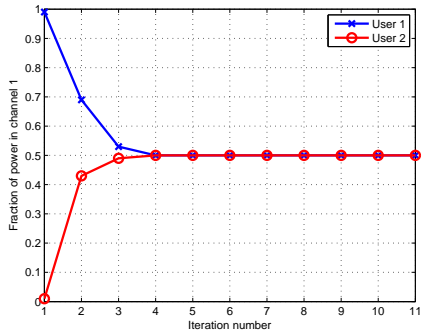}
\caption{{\em Convergence of best response dynamics for two user UC-GI
    game}.  We consider a model with $B = 2$ sub channels, $P = 1$,
    $N_0 = 0.01$, and we normalize $g_{11} = g_{22} = 1$.  We assume
    $g_{12}$ and $g_{21}$ are both drawn from a uniform distribution
    on $[0,1]$.  We initiate the best response dynamics at $P_{11} = 1
    - P_{12} = P$, and $P_{21} = 1 - P_{22} = 0$; observe that the
    powers $P_{ic}$ converge to $P/2$ for each $i$ and $c$.  The
    behavior is symmetric if we instead initiate with $P_{11} = 1 -
    P_{12} = 0$, and $P_{21} = 1 - P_{22} = P$.}
\label{Fig:BestResponse}
\end{figure}

\begin{figure}
\centering
\begin{psfrags}
\psfrag{X}{\footnotesize $X$}
\psfrag{N}{\footnotesize $N$}
\psfrag{SH}{\footnotesize $SH$}
\psfrag{SP}{\footnotesize $SP$}
\psfrag{S}{\footnotesize $2$}
\psfrag{P}{\footnotesize $1$}
\includegraphics[width=5in]{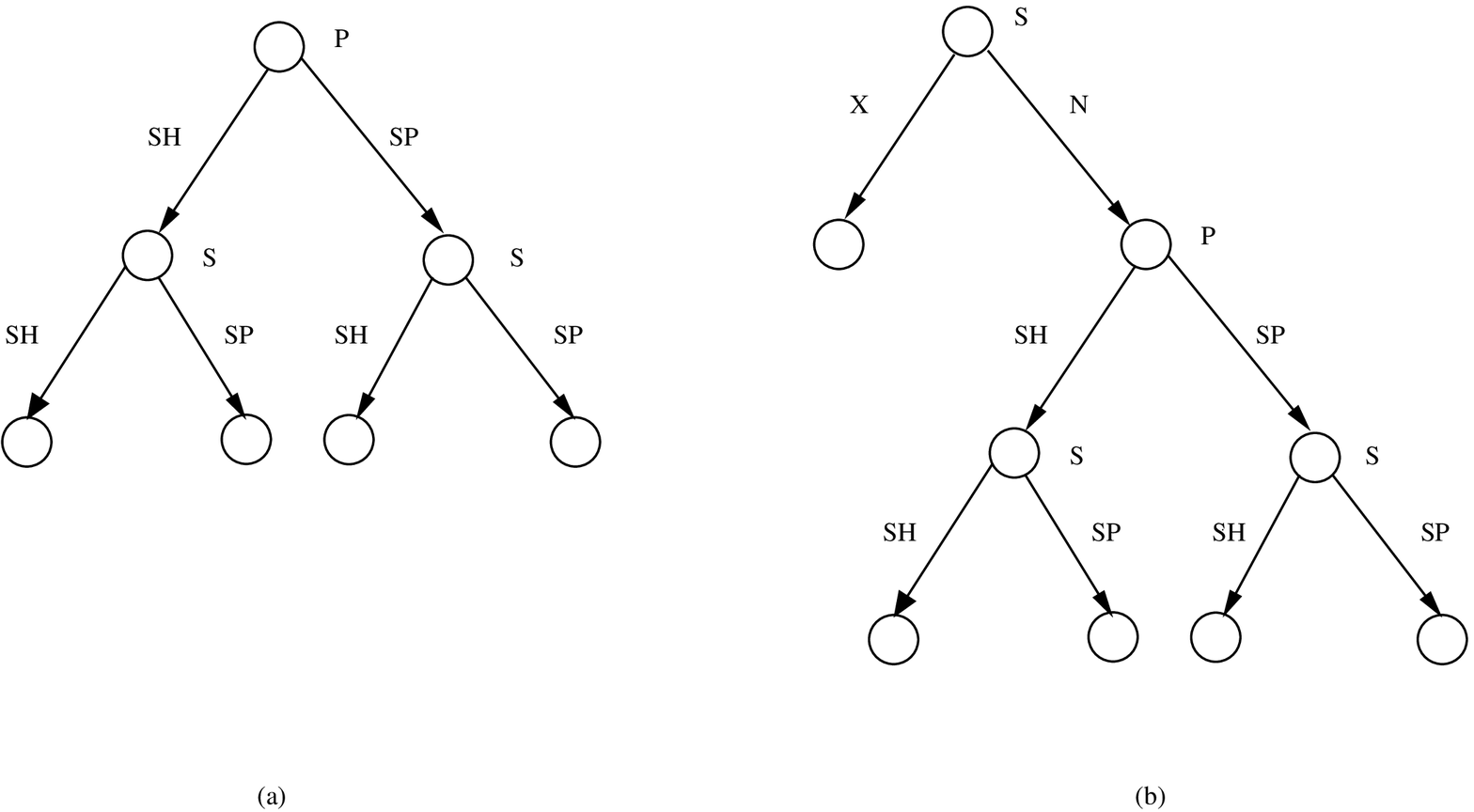}
\end{psfrags}
\caption{{\em Game trees for sequential games.}  Player 1 is the
  {\em primary} user; player 2 is the {\em secondary} user.  The tree in (a)
  describes the SBGI game.  The tree in (b) describes the SBGI-E game.}
\label{Fig:GameTreeCombined}
\end{figure}

\end{document}